\newcommand{\E}{\mathrm{E}}
\newcommand{\bR}{\mathbb{R}}
\newcommand{\ie}{{\it i.e.}}
\newtheorem{theorem}{Theorem}[section]
\newtheorem{corollary}[theorem]{Corollary}
\newtheorem{observation}[theorem]{Observation}
\newtheorem{proposition}[theorem]{Proposition}
\newcommand\ignore[1]{}
\newtheorem{fact}{Fact}
\newtheorem{remark}[theorem]{Remark}
\newtheorem{definition}[theorem]{Definition}
\title{Selling Privacy at Auction}
\author{Arpita Ghosh \\ Yahoo Research \\ 2821 Mission College Blvd. \\ Santa Clara, CA \thanks{arpita@yahoo-inc.com} \and Aaron Roth \\ CIS Department \\ University of Pennsylvania \\ 3330 Walnut Street \\ Philadelphia, PA \thanks{aaroth@cis.upenn.edu}}
\begin{document}
\maketitle

\begin{abstract}
We initiate the study of markets for private data,  using differential privacy.  We model a data analyst who wishes to buy sensitive information to estimate a population statistic.  The analyst wishes to obtain an \emph{accurate} estimate \emph{cheaply}, while the owners of the private data experience cost for their loss of privacy.

Our main result is that this problem can naturally be viewed and optimally solved as a variant of a multi-unit procurement auction. We derive auctions which are optimal up to small constant factors for two natural settings:
 \begin{enumerate}
 \item A data analyst with a fixed accuracy target, wishing to minimize his payments.
 \item A data analyst with a hard budget constraint, wishing to maximize his accuracy.
 \end{enumerate}
 In both cases, our comparison class is the set of envy-free pricings.

We then define a more stringent privacy model,  and show that  no individually rational mechanism in this model can achieve non-trivial accuracy. We propose several directions for future research to remedy this situation.
\end{abstract}
\thispagestyle{empty}
\setcounter{page}{0}
\pagebreak

\section{Introduction}
Organizations such as the Census Bureau and hospitals have long maintained databases of personal information. However, with the advent of the Internet, many corporations are now able to aggregate enormous quantities of sensitive information, and use, buy, and sell it for financial gain. Up until recently, the purchase and sale of private information was the exclusive domain of aggregators -- it was obtained for free from the actual owners of the data, for whom it was sensitive. However, recently, companies such as ``mint.com'' and ``Bynamite'' have started acting as brokers for private information at the consumer end, paying users for access to their sensitive information \cite{newsarticle1,newsarticle2}.
Many others, such as Yahoo, Microsoft, Google, and Facebook are also implicitly engaging in the purchase of private information in exchange for non-monetary compensation. In short, ``privacy'' has become a commodity that has already begun to be bought and sold, in a variety of ad-hoc ways.

Despite the commoditization of privacy in practice, markets for privacy lack a theoretical foundation. In this paper, we initiate the rigorous study of markets \emph{for} private data. Our goal is not to provide a complete solution for the myriad problems involved in the sale of private data, but rather to introduce a crisp model with which to investigate some of the many questions unique to the sale of private data. The study of privacy as a commodity is of immediate relevance, and also a source of many interesting theoretical problems: we hope that this paper elicits more new questions than it answers.

First, let us briefly consider some of the issues that make \emph{privacy} distinct from other commodities that we often deal with, and why this may complicate its sale:
\begin{enumerate}
\item First and foremost, in order sell \emph{privacy}, it is important to be able to define and quantify what we mean by privacy. In this regard, the commoditization of privacy has dovetailed nicely with the development of the theoretical underpinnings of privacy: recent work on \emph{differential privacy} \cite{DMNS06} (Definition \ref{diff-privacy}) provides a compelling definition and a precise way in which to quantify its sale. Importantly, as we will discuss, the guarantee of differential privacy has a natural utility-theoretic interpretation that makes it a natural quantity to buy and sell.
\item Private data is a good that exhibits intrinsic complementarities: a data analyst will typically not be interested in the private data of any particular individual, but rather in a representative sample from a large population. Nevertheless, he must purchase the data from particular individuals! Clearly, if there may be unknown correlations between individuals values for privacy and their private data, then the typical strategy of ``buying from the cheapest sellers'' is doomed to fail in this regard. How should an auction be structured by an analyst who wishes to calculate some value which is representative of an entire population?
\item An individual's cost for privacy may itself be private information. Suppose that Alice visits an oncologist, and subsequently is observed to significantly increase her value for privacy: this is of course disclosive! Is it possible to run an auction for private data that compensates individuals for the privacy loss they incur, simply due to the effect that their bids have on the behavior of the mechanism?
\end{enumerate}

\subsection{Differential Privacy as a Commodity}
Differential privacy, formally defined in Section \ref{sec:prelims}, was introduced by Dwork et al. \cite{DMNS06} as a technical definition for database privacy. Informally, an algorithm is $\epsilon$-differentially private if changing the data of a single individual does not change the probability of any outcome of the mechanism by more than an $\exp(\epsilon) \approx (1+\epsilon)$ multiplicative factor. Differential privacy also has a natural utility-theoretic interpretation that makes it a compelling measure with which to quantify privacy when buying or selling it\footnote{This utility theoretic interpretation has been used in another context: the work of McSherry and Talwar, and Nissim, Smorodinsky and Tennenholtz \cite{MT07, NST10} using differential privacy as a tool for traditional mechanism design.}.

 An important property of an $\epsilon$-differentially private algorithm $A$ is that its composition with any other database-independent function $f$ has the property that $f(A)$ remains $\epsilon$-differentially private. This allows us to reason about events that might seem quite far removed from the actual output of the algorithm. Quite literally, a guarantee of $\epsilon$-differential privacy is a guarantee that the probability of receiving phone calls during dinner, or of being denied health insurance will not increase by more than an $\exp(\epsilon)$ factor. This allows us to interpret differential privacy as a strong utility theoretic guarantee that holds simultaneously for arbitrary, unknown utility functions: for any individual, with any utility function $u$ over (arbitrary) future events, an $\epsilon$-differentially private computation will decrease his future expected utility by at most an $\exp(-\epsilon) \approx (1-\epsilon)$ multiplicative factor, or equivalently, by an $\epsilon\E[u(x)]$ additive factor, where the expectation is taken over all future events that the individual has preferences over. Therefore, there is a natural way for an individual to assign a cost to the use of his data in an $\epsilon$-differentially private manner: it should be worth to him an $\epsilon$-fraction of his expected future utility. We expand on this in section \ref{app:justification}.


\subsection{Results}
Our main contribution is to show that any differentially private mechanism that guarantees a certain accuracy {\em must} purchase a certain minimum amount of privacy from a certain minimum number of agents (both of which depend on the desired accuracy), which reduces the problem of privately providing an accurate answer to a relatively simple form of procurement problem.
Specifically, we study the following stylized model. There are $n$ individuals $[n]$, each of whom possesses a private bit $b_i$, which is already known by the administrator of the private database (for example, a hospital). Each individual also has a certain \emph{cost} function $c_i :\bR_+\rightarrow \bR_+$, which determines what her cost $c_i(\epsilon)$ is for her private bit $b_i$ to be used in an $\epsilon$-differentially private manner. Any feasible mechanism must pay each individual enough to compensate him for the use of his private data. Moreover, individuals may mis-report their cost functions in an attempt to maximize their payment, and so we are interested in mechanisms which properly incentivize individuals to report their true cost for privacy. On the other side of the market, the \emph{data analyst} wishes to estimate the quantity $s = \sum_{i=1}^n b_i$, and must compensate each individual through the mechanism's payments for this estimate. The data analyst may either have a fixed accuracy objective and wish to minimize his payments subject to obtaining the desired accuracy, or alternately have a fixed budget and wish to maximize the accuracy of his estimate within this budget.

We first consider the simpler model, in which individuals must be compensated for loss of privacy to their bits $b_i$, but not for any privacy-leakage due to implicit correlations between $b_i$ and their cost function $c_i$ (\ie, if the mechanism does not use an individual's bit $b_i$ at all in computing an estimate for the data analyst, the mechanism does not have to compensate individual $i$, even if changing her cost function would result in a different outcome for the mechanism). In trying to design an auction that guarantees the data analyst an accurate estimate of $s$, one might consider any number of complicated mechanisms that (for example) randomly sample individuals, and then attempt to buy from entire random samples -- there are many variations therein, and indeed, this was the direction from which we first explored the problem. Our main result is that it is not necessary to consider such mechanisms. We show that we may abstract away the structure of the mechanism, and without loss of generality consider multi-unit procurement auctions. This has some immediate consequences: if we are interested in the setting for which the data analyst has a fixed accuracy goal, subject to which he wishes to minimize his payment, then we show that the standard VCG mechanism is optimal among the set of envy-free mechanisms. If we are instead interested in the setting for which the data analyst has a fixed budget subject to which he wishes to maximize his accuracy, then we are in a more unusual procurement-auction setting: the buyer wishes to maximize the number of sellers he can buy from, and the cost to the sellers is a function of who else sells their data! In this setting, we give a truthful mechanism that is instance-by-instance optimal among the set of all fixed-price (envy free) mechanisms. We remark that our choice of fixed-price mechanisms as a benchmark has become standard in prior-free mechanism design (see, e.g. \cite{HK07, HR08}), but stands on firmer ground in auction settings for which Bayesian-optimal mechanisms are known also to charge fixed prices. We operate in a setting in which Bayesian-optimal mechanisms are not known, and so justifying (or improving) this choice of benchmark in our setting is an interesting open problem. (We note that \cite{EG11} derives the Bayesian optimal auction for a budget-constraint buyer who wants to purchase a set of items with maximum value subject to her budget constraint, but the model in \cite{EG11} does not capture our procurement auction problem with externalities which arise because the amount of privacy that needs to be bought from an individual (and therefore the cost to that seller) depends on the total number of individuals from whom privacy is being purchased.)

We then show a generic impossibility result: it is not, in general, possible for any direct revelation mechanism to compensate individuals for their privacy loss due to unknown correlations between their private bits $b_i$ and their cost functions $c_i$. If their costs are known to lie in some fixed range initially, it is  possible to offer them some non-trivial privacy guarantee, but finding the correct model in which to study the issue of unknown correlations between data and valuation for privacy is another important direction in which to take this research agenda.

\subsection{Related Work}
\subsubsection{Differential Privacy and Mechanism Design}
McSherry and Talwar proposed that differential privacy could itself be used as a \emph{solution concept} in mechanism design \cite{MT07}. They observed that a differentially private mechanism is approximately truthful, while simultaneously having some resilience to collusion. Using differential privacy as a solution concept as opposed to dominant strategy truthfulness, they gave some improved results in a variety of auction settings. Gupta et al. also used differential privacy as a solution concept in auction design \cite{GLMRT10}.

In a beautiful follow-up paper, Nissim, Smorodinsky, and Tennenholtz \cite{NST10} made the point that differential privacy may not be a compelling solution concept when beneficial deviations are easy to find (as indeed they are in the mechanism of \cite{MT07}). Nevertheless, they demonstrated a generic methodology for using differentially private mechanisms as tools for designing exactly truthful mechanisms \emph{that do not require payments}, and demonstrate the utility of this framework by designing new mechanisms for several problems.

In this paper, we consider an orthogonal problem: we do not try to use differential privacy as a tool in traditional mechanism design, but instead try to use the tools of traditional mechanism design to \emph{sell} differential privacy as a commodity. Nevertheless, we also use the utility theoretic properties of differential privacy that allow McSherry and Talwar to prove that it implies approximate truthfulness to motivate why it is natural for individuals to have linear cost functions for differential privacy.

In very recent work, Xiao~\cite{X11} addresses another question at the intersection of differential privacy and mechanism design: suppose the output of the database sanitization mechanism can be interpreted both as a sensitive quantity that must satisfy a differential privacy requirement, as well as the outcome of a {\em game}, the utility from which motivates agents to participate in the database in the first place. With this interpretation, it is reasonable to imagine that participants may lie about the bit stored in the database itself, in order to improve their utility from this game.
Xiao shows how to construct mechanisms that are simultaneously exactly truthful and differentially private, but also shows that this conjunction of truthfulness and differential privacy may not be sufficient to elicit truthful behavior when agents value privacy, \ie, have a cost to the information leaked by the mechanism about their private bit.
Chen et al. \cite{CCKMV11} propose a new, more general way of measuring privacy in agents' utility functions than that in \cite{X11}, and construct mechanisms that are truthful when including this privacy measure in the agents' utilities for settings that include 2-candidate voting, discrete facility location,  and the Groves mechanisms for public projects.
The key differences between our work and this line of investigation arises from what is treated as private information, \ie, what agents can lie about--- the agents in our model cannot lie about their private bit, which is already known to the database, but can strategically report their costs for privacy to increase their payment from the analyst, whereas the agents in \cite{X11} can lie about their private data to improve their utility from a game whose outcome depends on this input.

\subsubsection{Auctions Which Preserve Privacy}
Recently, Feigenbaum, Jaggard, and Schapira considered (using a different notion of privacy) how the implementation of an auction can affect how many bits of information are leaked about individuals bids \cite{Feigenbaum}. Specifically, they study to what extent information must be leaked in second price auctions and in the millionaires problem. Protecting the privacy of bids is an important problem, and although it is not the main focus of this paper, we consider it in the context of differential privacy in Section \ref{sec:privatebids}. We consider somewhat orthogonal notions of privacy and implementation that make our results incomparable to those of \cite{Feigenbaum}.

\subsubsection{Privacy and Economics}
Privacy and its relation to mechanism design has also been studied from a broader economic perspective, although primarily in the context of how preferences for privacy by agents may affect mechanisms, rather than in the context of markets \emph{for} privacy. For example, Calzolari and Pavan study the optimal disclosure policy when designing contracts for buyers who are in the position of repeatedly choosing between multiple sellers \cite{CP06}, and the recent work of  Taylor, Conitzer, and Wagman \cite{TCW10} studies the relationship between the ability of consumers to keep their identity private, and the ability of a monopolist to engage in price discrimination.

An exception is the essay of Laudon~\cite{Lau96}, which proposes the idea of a market for personal
information--- a `National Information Market'--- where individuals can
choose to sell or lease their information (possibly to be used in
aggregation with other individuals' information) in exchange for a
share of the revenue generated from its use; he argues that only
individuals whose cost from the `annoyance' caused by releasing their
information is lower than the payment they receive will participate in
this market. In the same spirit, the work of Kleinberg, Papadimitrou and Raghavan~\cite{KPR01} quantifies the value of private information in some specific settings, and proposes that individuals should be compensated for the use of their information to the extent of this value. Our individually rational auctions for privacy are
conceptually similar to this, but are investigated within the formal
framework of differential privacy, and from the perspective of
auction design.

\subsubsection{Relationship to the Differential Privacy Literature}
The now large literature on differential privacy (see \cite{DworkSurvey} for an excellent overview) has almost exclusively focused on techniques for guaranteeing $\epsilon$-differential privacy for various tasks, where $\epsilon$ has been taken as a given parameter. What has been almost entirely missing is any normative guidance for how to pick $\epsilon$. There is a natural tradeoff between the privacy parameter $\epsilon$ and the accuracy of privacy-preserving estimates (which is well-understood in the case of single statistics, see \cite{GRS09, BN10}). Therefore, this paper proposes to answer the question of how $\epsilon$ should be chosen: it should be the smallest value that the data analyst is able to afford, given the individuals' valuations for privacy (or equivalently, the smallest value that the owners of the data are willing to accept in exchange for their payment).

We also highlight in this work the explicit tradeoff between compensating individuals for the use of their private information, and the accuracy of our resulting estimates. Implicit in previous works on privacy has been the idea that for fixed values of $\epsilon$, individuals should be willing to participate in private databases given only some small positive incentive. However, this incentive may be different for different individuals, and without running an auction, a data collector is engaging in selection bias: he is only collecting data from those individuals who value their privacy at a low enough level to make participation in a given database worth while. Such individuals might not be representative of the general population, and resulting estimates may therefore be inaccurate. This source of inaccuracy is hidden in previous works, but we point out that it should be a real concern, and we explicitly address it in this paper.

\section{Preliminaries}
\label{sec:prelims}
We consider a database consisting of the data of $n$ individuals $\{1,\ldots,n\}$ whom we denote by $[n]$. Each individual $i$ is associated with a private bit $b_i \in \{0,1\}$, as well as a value $v_i$ parameterizing a cost function which quantifies their cost for loss of privacy. (We may think of the private bit as representing the answer to some arbitrary yes or no question. For the sake of discussion, let us assume that the private bit represents whether the individual has some embarrassing medical condition.) The private bit $b_i$ is verifiable, and the individual is not endowed with the ability to lie about their private bit. For example, the bit may already be known to a trusted database administrator (for example a hospital), or may be directly verifiable by the auctioneer (e.g. through a blood or saliva sample). On the other hand, the individual may lie about their \emph{value} for privacy $v_i$, and must be incentivized to report this parameter truthfully. We formalize this model in the following section.

\subsection{Differential Privacy}
Formally, a data set or database $D$ of size $n$ is a collection of $n$ elements from some abstract range $X$: $D \in X^n$.  We think of each element in the database as corresponding to the data of a single individual. Two databases $D,D^{(i)} \in X^n$ are \emph{neighbors} if they differ only in the data of a single individual, \ie, if $D_j = D^{(i)}_j$ for all $j \neq i$. The quantification of privacy we employ is that of \emph{differential privacy}, due to Dwork et al. \cite{DMNS06}:

\begin{definition}\label{diff-privacy}
An algorithm $A:X^n\rightarrow R$ (for an abstract range $R$) satisfies $\epsilon_i$-differential privacy with respect to individual $i$ if for any pair of neighboring databases $D, D^{(i)} \in X^n$ differing only in their $i$'th element, and for any event $S \subset R$:
$$\frac{\Pr[A(D) \in S]}{\Pr[A(D^{(i)}) \in S]} \leq e^{\epsilon_i}$$
An algorithm $A$ is $\epsilon_i$-{\em minimally private} with respect to individual $i$ if $\epsilon_i = \inf \epsilon$ such that $A$ is $\epsilon$-differentially private with respect to individual $i$. Throughout this paper, whenever we say that an algorithm is $\epsilon_i$-differentially private, we mean that it is $\epsilon_i$-minimally differentially private.
\end{definition}
\begin{remark}
A couple of remarks are in order. First note that (unless $A$ computes a constant function) for $A$ to be differentially private it must be a randomized algorithm. Second, note that differential privacy states intuitively that no single individual can have a large effect on the output distribution of an algorithm $A$, and hence the output of $A$ contains little ``information'' about any individual. Indeed, if $B$ is a random variable taking values in $X^n$, stating that $A$ is $\epsilon$-differentially private with respect to each individual $i$ is a stronger guarantee (and in particular implies) that the mutual information between $B$ and $A(B)$ is at most $\epsilon$: $I(B; A(B)) \leq \epsilon$. In particular, note that as a privacy guarantee,
$\epsilon$-differential privacy becomes less meaningful for large values of $\epsilon$. In this paper, we will restrict our attention to values of $\epsilon < 1$. Note that in this case, $\exp(\epsilon) \approx 1+\epsilon$.
\end{remark}

The following easy fact follows immediately~\cite{DMNS06}:
\begin{fact}
\label{fact:compose}
Consider an algorithm $A:X^n\rightarrow R$ that satisfies $\epsilon_i$-differential privacy with respect to each individual $i$, and let $T \subset [n]$ denote a set of indices. Consider two databases $D, D^T \in X^n$ at Hamming distance $|T|$ that differ exactly on the indices in $T$. Then for any event $S \subseteq R$:
$$\frac{\Pr[A(D) \in S]}{\Pr[A(D^{T}) \in S]} \leq e^{\sum_{i \in T}\epsilon_i}$$
\end{fact}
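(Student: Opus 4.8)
The plan is to prove this by a hybrid (telescoping) argument that interpolates between $D$ and $D^T$ one coordinate at a time, thereby reducing the group statement to repeated application of the single-individual guarantee in Definition~\ref{diff-privacy}. First I would enumerate the differing indices, writing $T = \{i_1, \ldots, i_k\}$ with $k = |T|$, and construct a chain of databases $D = D^{(0)}, D^{(1)}, \ldots, D^{(k)} = D^T$, where $D^{(j)}$ agrees with $D^T$ on the coordinates $i_1, \ldots, i_j$ and with $D$ everywhere else. By construction, $D^{(0)} = D$, $D^{(k)} = D^T$, and each consecutive pair $D^{(j-1)}, D^{(j)}$ differs in exactly the single coordinate $i_j$, so they are neighbors in the sense of Definition~\ref{diff-privacy}.

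Next, since $A$ satisfies $\epsilon_{i_j}$-differential privacy with respect to individual $i_j$, applying the definition to the neighboring pair $D^{(j-1)}, D^{(j)}$ gives, for every event $S \subseteq R$, the multiplicative inequality $\Pr[A(D^{(j-1)}) \in S] \leq e^{\epsilon_{i_j}}\,\Pr[A(D^{(j)}) \in S]$. Chaining these $k$ inequalities from $j=1$ to $j=k$ and telescoping yields
$$\Pr[A(D) \in S] \;\leq\; \Big(\prod_{j=1}^{k} e^{\epsilon_{i_j}}\Big)\,\Pr[A(D^{T}) \in S] \;=\; e^{\sum_{i \in T} \epsilon_i}\,\Pr[A(D^{T}) \in S],$$
which is exactly the claimed bound after rearranging into ratio form.

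The main point to be careful about is a bookkeeping issue rather than a genuine obstacle: well-definedness of the ratio when some probability vanishes. I would handle this by stating each step as the multiplicative inequality above rather than as a ratio, so the chain composes cleanly even when an intermediate probability is zero; I would also note that the ``for any pair of neighboring databases'' phrasing of Definition~\ref{diff-privacy} lets me apply the single-individual guarantee in the direction I need at each step, irrespective of which of the two neighbors is playing the role of numerator. The only real ingredient is that the intermediate databases $D^{(j)}$ form a valid neighbor-chain, which is immediate from the construction.
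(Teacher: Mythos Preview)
Your proposal is correct and follows essentially the same approach as the paper: both construct a chain of intermediate databases interpolating one coordinate of $T$ at a time and telescope the single-individual bounds. Your added care in phrasing each step as a multiplicative inequality (rather than a ratio) to sidestep division-by-zero is a nice touch, but otherwise the arguments are identical.
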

\begin{proof}
Consider the sequence of databases $D^0,\ldots,D^{|T|}$ such that $D^0 = D$, $D^{|T|} = D^T$  and for each $0 \leq i < |T|$, databases $D^i$ and $D^{i+1}$ are neighbors, differing in exactly the $i$'th index of $T$. Then  for any event $S$:
$$\frac{\Pr[A(D) \in S]}{\Pr[A(D^T) \in S]} = \prod_{i=0}^{|T|-1}\frac{\Pr[A(D^i) \in S]}{\Pr[A(D^{i+1}) \in S]} \leq \prod_{i \in T}\exp(\epsilon_i) = \exp(\sum_{i\in T}\epsilon_i)$$
\end{proof}

A useful primitive for differential privacy is the Laplacian distribution, adding random noise from which produces differentially private output~\cite{DMNS06}:
\begin{definition}
Denote by Lap$(\sigma)$ the symmetric Laplacian distribution with mean $0$ and scaling $\sigma$. This distribution has probability density function:
$$f(x) = \frac{1}{2\sigma}\exp\left(-\frac{|x|}{\sigma}\right)$$
We will sometimes abuse notation and write Lap($\sigma$) to denote the realization of a random variable drawn from the Laplacian distribution with parameter $\sigma$.
\end{definition}

\subsection{Mechanism Design}
In this section, we specify the utility function of the participants in the mechanism, and in particular, how it relates to the privacy guarantees of the mechanism.

Every individual $i$ has some (unknown to the mechanism) single-parameter cost function with parameter $v_i$, written  $c(v_i, \cdot):\bR_+\rightarrow \bR_+$, where $c(v_i, \epsilon)$ represents player $i$'s cost for having his data used in an $\epsilon$-differentially private manner. Cost functions are normalized so that $c(v_i,0) = 0$ for all $v_i$, and are assumed to be continuous. We will study two models, informally, one of which will treat an individuals data only as his private bit $b_i$, and one of which will treat his data as the tuple $(v_i,b_i)$.  Each individuals cost function belongs to the same publicly known family, but the parameter $v_i$ is known only to the individual, and must be reported to the mechanism. We will require that the family of cost functions admit a total ordering independently of $\epsilon$. That is, the property that our results will require is that for any $i \neq j$, and for any $\epsilon \in \bR_+$, it should hold that $c(v_i,\epsilon) \leq c(v_j, \epsilon)$ if and only if $v_i  \leq v_j$. Natural choices of cost functions which obey this property are linear cost functions, which take the form $c(v_i,\epsilon) = v_i\epsilon$, exponential cost functions which take the form $c(v_i,\epsilon) = \exp(\epsilon v_i)$, quadratic cost functions of the form $c(v_i,\epsilon) = v_i\epsilon^2$, as well as many other natural choices.

A mechanism $M:\bR_+^n\times \{0,1\}^n\rightarrow \bR \times \bR_+^n$ takes as input a vector of cost parameters $v = (v_1,\ldots,v_n) \in \bR_+^n$ and a collection of private bit values $b \in \{0,1\}^n$, and outputs a real number (an estimate of some statistic $\hat{s}$ of $b$ of interest to the ``data analyst''), as well as a payment that will be collected from the data analyst to be distributed to the participants in the mechanism.

We consider two models of privacy:
\begin{enumerate}
\item In the \emph{insensitive value} model, the mechanism $M$ first inspects the reported cost parameters $v$ and then selects (as an arbitrary deterministic function of $v$) a randomized algorithm $A:\{0,1\}^n\rightarrow \mathbb{R}$ together with a set of payments $p_1,\ldots,p_n$. The mechanism then computes $A(b)$, which is $\epsilon_i$-differentially private with respect to each individual $i$, for some $\epsilon_i$. $M$ outputs a statistic $\hat{s} = A(b)$, and each individual $i$ experiences cost $c(v_i,\epsilon_i)$. Note that in this model, individuals incur privacy cost only as a function of the use of their private bit $b_i$, and not as a function of the use of their value for privacy $v_i$: the algorithm is free to use $v_i$ in any way. The algorithm pays each individual $p_i$, and collects $P \geq \sum_{i=1}^np_i$ from the data analyst. Note that the output of the mechanism from the data analyst's perspective is the pair $(\hat{s}, P)$.
\item In the \emph{sensitive value} model, $M$ is some randomized algorithm $M:X^n\rightarrow R$, for $X = \bR_+\times \{0,1\}$ and $R = \bR_+\times \bR_+$: i.e. its input is a set of n tuples $(v_i,b_i)$ one for each individual, and its output is a pair of reals: a statistic $\hat{s}$ and a payment collected from the data analyst $P$. $M$ itself is $\epsilon_i$-differentially private with respect to each individual $I$ for some $\epsilon_i$, and each individual experiences cost $c(v_i,\epsilon_i)$. The mechanism compensates each individual an (unobserved) amount $p_i$, with the restriction that $\sum_{i=1}^n p_i \leq P$. Note that in this model, individuals experience cost as a function of the use of both their private bit $b_i$, as well as their reported value for privacy $v_i$.
\end{enumerate}

Note that in both cases, the data analyst only learns the total payment $P$ that he must make, not necessarily the distributions $p_i$ which are made to each individual.

 For any $v'_i \in \bR_+$ we let $(v_{-i}, v'_i)$ denote the vector that results from changing entry $v_i$ in $v$ to $v_i'$.

A player $i$ who recieves payment $p_i$, and whose data is used in an $\epsilon_i$-differentially private way derives utility $u_i = p_i - c(v_i,\epsilon_i)$. Here, $\epsilon_i$ is the privacy parameter of the selected algorithm $A$ if we are in the insensitive value model, and is the privacy parameter of the mechanism $M$ if we are in the sensitive value model.

 Since any individual may opt against participating in our mechanism, we require first that our mechanisms be \emph{individually rational}:
\begin{definition}
A mechanism $M:\bR_+^n\times \{0,1\}^n\rightarrow \bR \times \bR_+^n$ is \emph{individually rational} if for all $v \in \bR_+^n$:
$$p_i(v) \geq c(v_i,\epsilon_i(v))$$
 where here the payment vectors and privacy parameters are viewed as functions of the vector of reported types $v$. If $p_i(v)$ and $\epsilon_i(v)$ are random variables, we require that our mechanisms are ex-post individually rational: that is, the above inequality must hold for all realizations of $p_i(v)$ and $\epsilon_i(v)$. In words, each player must be guaranteed non-negative utility by participating and truthfully reporting his value to the mechanism.
\end{definition}
Since individuals may misreport their costs so as to maximize their gain, we also require our mechanisms to be \emph{truthful}:
\begin{definition}
A mechanism $M:\bR_+^n\times \{0,1\}^n\rightarrow \bR \times \bR_+^n$ is \emph{dominant-strategy truthful} if for all $v \in \bR_+^n$, for all $i \in [n]$, and for all $v_i' \in \bR_+$:
$$p_i(v) - c(v_i,\epsilon_i(v)) \geq p_i(v_{-i}, v_i') - c(v_i, \epsilon_i(v_{-i}, v_i'))$$
that is, no player can ever increase his utility by misreporting his value for privacy. If $p_i(v)$ and $\epsilon_i(v)$ are random variables, the above inequality should hold in expectation over the internal randomness of the mechanism\footnote{That is, we require only truthfulness in expectation. However, all of our mechanisms will in fact be ex-post truthful, and in fact the payment schemes will be deterministic. Our lower bounds will hold even for mechanisms which are merely truthful in expectation, which only strengthens our results.}.
\end{definition}

The mechanism is run on behalf of some data analyst, who wishes to know an estimate of the statistic $s \equiv \sum_{i=1}^n b_i$. The mechanism outputs some randomized estimate of this quantity $\hat{s}$, where the randomization is to ensure differential privacy, and the analyst prefers more accurate answers. We choose to focus on statistics which can be represented as sums of boolean variables because of the central role that they play in the privacy literature (in which they are known as \emph{counting queries} or \emph{predicate queries}). In particular, the ability to accurately answer queries of this sort is sufficient to be able to implement a wide range of machine learning algorithms over the data (see \cite{BDMN05}).
\begin{definition}
A mechanism $M$ satisfies $k$-accuracy if for any $D \in \{0,1\}^n$, it outputs an estimate $\hat{s}$ such that:
$$\Pr[|\hat{s} - s| \geq k] \leq \frac{1}{3}$$
where the probability is taken over the internal coins of the mechanism.
\end{definition}
The constant $1/3$ is of course inconsequential: it can be changed to any desired constant without qualitatively affecting the results.

We may consider two dual objectives for our mechanism. Our data analyst may have a fixed goal of $k$-accuracy for some $k$ in which case we want to design mechanisms which deliver $k$-accurate estimates of $s$ so as to minimize the sum of the payments. Alternately, our data analyst may have a fixed budget $B \in \bR_+$ (say an NSF grant that can be used for data procurement). In this case, our goal is to design a mechanism which is $k$-accurate for the smallest possible value of $k$, while under the constraint that the sum of the payments never exceeds $B$.

\subsection{Valuing Differential Privacy}
\label{app:justification}
In this section, we provide a brief justification for why individuals should be able to quantify their cost for experiencing an $\epsilon$-differentially private use of their private data.
Say that $\mathcal{A}$ denotes the set of all future events for which an individual $i$ has preferences over outcomes, and $u_i:\mathcal{A}\rightarrow \bR$ is a function mapping events to $i$'s utility for that event. Suppose that $D \in X^n$ is a data-set containing individual $i$'s private data, and that $M:X^n\rightarrow R$ is a mechanism operating on $D$ promising $\epsilon_i$-differential privacy to individual $i$. Let $D'$ be a data-set that is identical to $D$ except that it does not include the data of individual $i$ (equivalently, it includes the data of individual $i$, but it is used in a $0$-differentially private manner), and let $f:R\rightarrow \Delta\mathcal{A}$ be the (arbitrary) function that determines the distribution over all future events, conditioned on the output of mechanism $M$.

A basic consequence of differential privacy is the following:
\begin{fact}
\label{fact:postprocessing}
If $M:X^n\rightarrow R$ is $\epsilon_i$-differentially private with respect to individual $i$, and $f:R\rightarrow R'$ is any arbitrary (randomized) function, then the composition $f\circ M:X^n\rightarrow R'$ is also $\epsilon_i$-differentially private with respect to individual $i$.
\end{fact}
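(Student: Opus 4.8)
The plan is to prove this via the standard post-processing argument: since $f$ accesses the data only through the output of $M$, it cannot increase the distinguishability between two neighboring databases. First I would reduce to the case in which $f$ is deterministic, and then lift the result to randomized $f$ by conditioning on $f$'s internal coins.

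First I would treat a deterministic $f:R\to R'$. Fix any event $S'\subseteq R'$ and any pair of neighboring databases $D, D^{(i)}$ differing only in coordinate $i$. Setting $S = f^{-1}(S')\subseteq R$, the event $(f\circ M)(D)\in S'$ is identical to the event $M(D)\in S$, so that
$$\frac{\Pr[(f\circ M)(D)\in S']}{\Pr[(f\circ M)(D^{(i)})\in S']} = \frac{\Pr[M(D)\in S]}{\Pr[M(D^{(i)})\in S]} \leq e^{\epsilon_i},$$
where the final inequality is precisely the $\epsilon_i$-differential privacy of $M$ instantiated at the event $S$. This already settles the claim whenever $f$ is deterministic.

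Next I would handle a randomized $f$ by making its randomness explicit. I would model $f$ as a map $f(r,\cdot)$, where $r$ is drawn from some distribution independent of both the database and the internal coins of $M$, so that for each fixed $r$ the map $f(r,\cdot)$ is deterministic. Then, using this independence,
$$\Pr[(f\circ M)(D)\in S'] = \E_r\left[\Pr\nolimits_M[f(r, M(D))\in S']\right].$$
By the deterministic case applied pointwise in $r$, the inner probability on $D$ is at most $e^{\epsilon_i}$ times the corresponding inner probability on $D^{(i)}$; since this bound holds for every fixed $r$, it is preserved upon taking the expectation over $r$, yielding $\Pr[(f\circ M)(D)\in S'] \leq e^{\epsilon_i}\,\Pr[(f\circ M)(D^{(i)})\in S']$. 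As $S'$ and the neighboring pair were arbitrary, $f\circ M$ is $\epsilon_i$-differentially private with respect to $i$.

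I do not expect any serious obstacle here; the one point requiring care is the randomized reduction, where one must ensure that $f$'s coins are independent of $M$'s coins and of the data, so that conditioning on $r$ is legitimate and the per-$r$ inequality can be integrated against the law of $r$. Strictly speaking one should also invoke mild measurability assumptions guaranteeing that $f^{-1}(S')$ is an event, which I would take as given for the abstract ranges in play.
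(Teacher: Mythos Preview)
Your proposal is correct and matches the paper's proof essentially step for step: the paper also first handles the deterministic case via the preimage $T=\{r\in R: f(r)\in S\}$ and then extends to randomized $f$ by noting that such an $f$ is a convex combination of deterministic maps. Your treatment of the randomized case is a slightly more explicit rendering of that same convex-combination argument.
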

\begin{proof}
First, assume that $f$ is a deterministic function $f:R\rightarrow R'$. Fix any event $S \subset R'$ and let $T \subset R$ be $T = \{r \in R : f(r) \in S\}$. Now for any pair of neighboring databases $D, D' \in X^n$ differing in their $i$'th coordinate, we have:
\begin{eqnarray*}
\Pr[f(M(D)) \in S] &=& \Pr[M(D) \in T] \\
&\leq& e^{\epsilon_i}\Pr[M(D') \in T] \\
&=& e^{\epsilon_i}\Pr[f(M(D')) \in S]
\end{eqnarray*}
which is what we wanted. To see that the same result holds for randomized mappings $f$, it suffices to observe that any randomized mapping $f:R\rightarrow R'$ is simply a convex combination of deterministic functions $f:R\rightarrow R'$.
\end{proof}

By the guarantee of differential privacy together with Fact \ref{fact:postprocessing}, we have:
\begin{eqnarray*}
\E_{x \sim f(M(D))}[u_i(x)] &=& \sum_{x \in \mathcal{A}}u_i(x)\cdot\Pr_{f(M(D))}[x] \\
&\leq& \sum_{x \in \mathcal{A}}u_i(x)\cdot\exp(\epsilon_i)\Pr_{f(M(D'))}[x] \\
&=& \exp(\epsilon_i)\E_{x \sim f(M(D'))}[u_i(x)]
\end{eqnarray*}
Similarly,
$$\E_{x \sim f(M(D))}[u_i(x)] \geq \exp(-\epsilon_i)\E_{x \sim f(M(D'))}[u_i(x)]$$

Therefore, when individual $i$ is deciding whether or not to allow his data to be used in an $\epsilon_i$-differentially private way, he is facing the decision about whether he would like his data to be used in such a way that could change his future utility by at most an additive factor of
$$\Delta u_i \equiv (\exp(\epsilon_i)-1)\E_{x \sim f(M(D'))}[u_i(x)]$$
 and so this is a natural quantity for $i$ to value his privacy at. This naturally motivates a cost function of the form $c(v_i,\epsilon_i) = (\exp(\epsilon_i)-1)v_i$, setting $v_i =\E_{x \sim f(M(D'))}[u_i(x)])$. Note that for small values of $\epsilon_i$, $(\exp(\epsilon_i)-1) \approx \epsilon_i$, which also motivates \emph{linear} utility functions of the form: $c(v_i,\epsilon_i) = \epsilon_iv_i$. Both of these types of cost functions are accommodated by our model, as well as many other reasonable choices.

\section{Characterizing Accurate Mechanisms}
\label{sec:characterize}
In this section, we show necessary and sufficient conditions on the amount of privacy that a mechanism must purchase from each individual in order to guarantee a fixed level of accuracy. To obtain a given level of accuracy, we show that a mechanism must purchase {\em at least} $\epsilon$ units of privacy, from {\em at least} $|H|$ individuals, where the values of $\epsilon$ and $|H|$ depend on the desired accuracy. We emphasize that these necessary conditions are independent of any truthfulness requirements on the mechanism, and arise purely because of the need to achieve accuracy. These conditions apply to both the sensitive value model and the insensitive value model. This greatly simplifies the mechanism-design process for auctions for private data, because it allows us to  restrict our attention to multi-unit procurement auctions without loss of generality.

\begin{theorem}
\label{thm:necessary}
Let $0 < \alpha < 1$. Any differentially private mechanism that is $\alpha\cdot n/4$-accurate must select a set of users $H \subseteq [n]$ such that:
\begin{enumerate}
\item $\epsilon_i \geq \frac{1}{\alpha n}$ for all $i \in H$.
\item $|H| \geq (1-\alpha)n$.
\end{enumerate}
\end{theorem}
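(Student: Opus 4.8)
The plan is to take $H=\{i\in[n]:\epsilon_i\ge \tfrac{1}{\alpha n}\}$, so that condition (1) holds by definition, and to reduce condition (2) to a bound on the complementary ``barely-used'' set $L=[n]\setminus H=\{i:\epsilon_i<\tfrac{1}{\alpha n}\}$: it suffices to show $|L|\le \alpha n$. I would prove this by contradiction, exhibiting a pair of databases that the mechanism cannot reliably distinguish yet which have very different true counts. The argument is the same in both privacy models: I fix the reported cost vector $v$ (which in the insensitive model pins down the chosen algorithm $A$ and hence the parameters $\epsilon_i$, and in the sensitive model simply fixes the $v$-coordinates of $M$'s input), and then vary only the bits $b$. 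Since the operative algorithm is $\epsilon_i$-differentially private with respect to each individual $i$, it is in particular $\epsilon_i$-private under a change of $b_i$ alone, so Fact~\ref{fact:compose} applies.

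Suppose for contradiction that $|L|>\alpha n$, and pick a subset $T\subseteq L$ with $|T|\approx \alpha n/2$ (possible since $|L|>\alpha n$; integrality is routine). Let $D$ be the all-zeros database and let $D^{T}$ agree with $D$ except that the bits indexed by $T$ are set to $1$, so that $D$ and $D^T$ differ exactly on $T$ and the true counts satisfy $s(D)=0$ and $s(D^T)=|T|$. Consider the tail event $S=\{\hat s< k\}$ for $k=\alpha n/4$. By $k$-accuracy on $D$ we get $\Pr[A(D)\in S]\ge 2/3$, and by $k$-accuracy on $D^T$, together with $|T|\ge 2k$ (which makes the two confidence windows $(-k,k)$ and $(|T|-k,|T|+k)$ disjoint), we get $\Pr[A(D^T)\in S]\le 1/3$. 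Hence the likelihood ratio on $S$ is at least $2$.

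Now I would invoke Fact~\ref{fact:compose} to upper-bound this same ratio: since $D$ and $D^T$ differ exactly on $T$, it is at most $\exp\!\left(\sum_{i\in T}\epsilon_i\right)$. Combining, $\sum_{i\in T}\epsilon_i\ge \ln 2$. But every $i\in T\subseteq L$ has $\epsilon_i<\tfrac{1}{\alpha n}$, so $\sum_{i\in T}\epsilon_i< |T|\cdot\tfrac{1}{\alpha n}\approx \tfrac12$, and this contradicts $\sum_{i\in T}\epsilon_i\ge \ln 2$ since $\tfrac12<\ln 2$. Therefore $|L|\le \alpha n$ and $|H|\ge(1-\alpha)n$.

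The main obstacle is choosing the size of $T$ to meet two competing demands at once: it must be large enough ($|T|\ge 2k=\alpha n/2$) that the accuracy guarantees on $D$ and $D^T$ force a constant separation between the output distributions, yet small enough ($|T|<\alpha n\ln 2$) that the accumulated privacy loss $\sum_{i\in T}\epsilon_i$ stays below $\ln 2$. The whole argument hinges on these two windows overlapping, which is exactly the numerical slack $\tfrac12<\ln 2$ afforded by the chosen thresholds $k=\alpha n/4$ and $\epsilon_i<\tfrac{1}{\alpha n}$; note in particular that flipping \emph{all} of $L$ at once would fail, since $\sum_{i\in L}\epsilon_i$ grows with $|L|$, so it is essential to work with a carefully sized subset. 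A secondary point to get right is the uniform treatment of the two models and the observation that accuracy must hold for the specific worst-case databases $D,D^T$ we construct, not merely on average over inputs.
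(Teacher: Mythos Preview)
Your proposal is correct and follows essentially the same approach as the paper: define $H$ by the threshold $\epsilon_i\ge 1/(\alpha n)$, assume the complement is too large, flip roughly $\alpha n/2$ bits inside it, and combine the accuracy guarantee on the two databases with Fact~\ref{fact:compose} to derive the numerical contradiction $1/2<\ln 2$ (equivalently, the paper's $2/(3\sqrt{e})>1/3$). The only cosmetic difference is that you fix $D=0^n$, which lets you skip the paper's case split on $\bar H^0$ versus $\bar H^1$; otherwise the arguments are the same.
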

\begin{proof}
Let $M$ be a mechanism that is $\alpha\cdot n/4$-accurate, and let $H \subset [n]$ be the set of individuals $i$ such that $\epsilon_i \geq 1/\alpha n$. For point of contradiction, suppose that $|H| < (1-\alpha)n$. Let $\bar{H} = [n]\setminus H$. We have that $|\bar{H}| > \alpha n$. Let $S = \{x \in \bR : |x - s| < \frac{\alpha n}{4}\}$, where $s = \sum_{i=1}^nb_i$. By the accuracy of the mechanism, we have that the estimate $\hat{s}$ output by the mechanism $M(v,D)$ satisfies:
$$\Pr[\hat{s} \in S] \geq \frac{2}{3}.$$
Let $\bar{H}^1 = \{i \in \bar{H} : b_i = 1\}$ and let $\bar{H}^0 = \{i \in \bar{H} : b_i = 0\}$. Since $\bar{H}^0$ and $\bar{H}^1$ form a partition of $\bar{H}$, it must be that $$\max(|\bar{H}^0|, |\bar{H}^1|) > \alpha n/2.$$ Without loss of generality, assume that $|\bar{H}^0| > \alpha n/2$ (the other case is identical). Let $T \subset \bar{H}^0$ such that $|T| = \alpha n/2$. Let $D'$ be the database that results in setting each bit $b_i' = b_i$ if $i \not\in T$, and $b_i' = 1$ otherwise. Note that $D'$ and $D$ have hamming distance $|T| = \alpha n/2$, and differ exactly on the indices of $T$. Let $\hat{s}'$ be the estimate generated by $M(v,D')$. By differential privacy of $M$, we have, using Fact \ref{fact:compose}:
\begin{eqnarray*}
\Pr[\hat{s}' \in S] &\geq& \exp(-\sum_{i \in T}\epsilon_i) \cdot \Pr[\hat{s} \in S] \\
&\geq& \exp(-\frac{\alpha n}{2}\cdot \frac{1}{\alpha n})\cdot \frac{2}{3} \\
&=& \frac{2}{3\sqrt{e}} \\
&>& \frac{1}{3}.
\end{eqnarray*}
Let $s' = \sum_{i = 1}^n b_i'$. Note that $s' = s + \alpha n/2$. If $\hat{s}' \in S$, then by definition: $|\hat{s}' - s| < \alpha n/4$. By the triangle inequality, we must therefore have that $|\hat{s}' - s'| > \alpha n/4$ with probability strictly greater than $1/3$, contradicting the assumption that $M$ is $\alpha\cdot n/4$ accurate.
\end{proof}
This theorem can be thought of as our main result, quantifying the necessary trade-off between accuracy and privacy: to guarantee $\alpha n/4$-accuracy, at least $(1-\alpha)$ fraction of the population {\em must} incur at least a $\frac{1}{\alpha n}$ privacy loss. The corollary below follows immediately, translating this into a lower bound on payment.
\begin{corollary}
Any $\alpha n$-accurate individually rational mechanism must pay out a total payment of at least:
$$\sum_{i=1}^np_i \geq \sum_{i=1}^{(1-4\alpha)n}c\left(v_i, \frac{1}{4\alpha n}\right)$$
where bidders are ordered such that $v_1 \leq v_2 \leq \cdots \leq v_n$.
\end{corollary}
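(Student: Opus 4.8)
The plan is to derive the corollary directly from Theorem~\ref{thm:necessary}, combined with individual rationality and the ordering structure imposed on the cost functions. The first step is to instantiate the theorem at the right scale. A mechanism that is $\alpha n$-accurate is in particular $(\beta n)/4$-accurate for $\beta = 4\alpha$, and provided $\alpha < 1/4$ we have $0 < \beta < 1$, so Theorem~\ref{thm:necessary} applies. It hands us a set $H \subseteq [n]$ with $|H| \geq (1-4\alpha)n$ such that every $i \in H$ satisfies $\epsilon_i \geq \frac{1}{4\alpha n}$. (Any $k$-accurate mechanism in either privacy model is differentially private, so the hypothesis of the theorem is met.)

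Next I would invoke individual rationality, which gives $p_i \geq c(v_i,\epsilon_i)$ for every $i$. For $i \in H$ I would use that $c(v_i,\cdot)$ is nondecreasing in its privacy argument, so that $\epsilon_i \geq \frac{1}{4\alpha n}$ yields $c(v_i,\epsilon_i) \geq c\!\left(v_i,\frac{1}{4\alpha n}\right)$. Since the payments lie in $\bR_+$, I can discard the indices outside $H$ and obtain
$$\sum_{i=1}^n p_i \ \geq\ \sum_{i \in H} c\!\left(v_i, \tfrac{1}{4\alpha n}\right).$$

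The final step converts the sum over $H$ into the claimed sum over the $(1-4\alpha)n$ cheapest bidders. The subtlety here is that the mechanism may choose $H$ adversarially so as to minimize its payment, so I need a bound valid for the most favorable choice of $H$. Because the cost family admits a total ordering independent of $\epsilon$ (so $v_i \leq v_j$ forces $c(v_i,\epsilon) \leq c(v_j,\epsilon)$) and each summand is nonnegative, the minimum of $\sum_{i \in S} c\!\left(v_i,\frac{1}{4\alpha n}\right)$ over all sets $S$ with $|S| \geq (1-4\alpha)n$ is attained precisely by the $(1-4\alpha)n$ individuals of smallest value $v_1 \leq \cdots \leq v_{(1-4\alpha)n}$. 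Using this minimizer as a lower bound on the sum over $H$ gives exactly the stated inequality.

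Each step is short; the corollary is essentially a repackaging of Theorem~\ref{thm:necessary}. The two points that genuinely require care are (i) the rescaling $\beta = 4\alpha$ together with the implicit restriction $\alpha < 1/4$ (for $\alpha \geq 1/4$ the right-hand side is an empty sum and the claim is vacuous), and (ii) the monotonicity of $c(v_i,\cdot)$ in $\epsilon$, which is not listed as an axiom but holds for all the cost families in the model (e.g.\ $v_i\epsilon$, $\exp(\epsilon v_i)$, $v_i\epsilon^2$) and is consistent with the normalization $c(v_i,0)=0$. The only conceptual step is the combinatorial minimization over the adversarial choice of $H$, and it reduces immediately to nonnegativity of payments plus the $\epsilon$-independent ordering of the $v_i$.
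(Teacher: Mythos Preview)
Your proposal is correct and is exactly the intended argument: the paper states only that the corollary ``follows immediately'' from Theorem~\ref{thm:necessary}, and your unpacking---rescale via $\beta=4\alpha$, apply individual rationality together with monotonicity of $c(v_i,\cdot)$ in $\epsilon$, then minimize over the adversarial choice of $H$ using the $\epsilon$-independent ordering of the $v_i$---is precisely that immediate derivation. Your flagged caveats (the implicit restriction $\alpha<1/4$ and the unstated-but-intended monotonicity of $c$ in $\epsilon$) are apt and worth noting.
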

We remark that this corollary assumes only individual rationality, and is in general achievable only by an omniscient mechanism that knows all players' cost functions. No \emph{truthful} $\alpha n$-accurate mechanism is able to pay as little as this benchmark in general.

Theorem \ref{thm:necessary} gave necessary conditions on the privacy costs of an accurate mechanism. Next, we show that up to small constant factors, they are also \emph{sufficient} conditions for an accurate mechanism:
\begin{theorem}
\label{thm:sufficient}
Let $0 < \alpha < 1$. There exists a differentially private mechanism that is $(\frac{1}{2}+\ln 3)\alpha \cdot n$-accurate and selects a set of individuals $H \subseteq [n]$ such that:
\begin{enumerate}
\item $\epsilon_i = \left\{
                      \begin{array}{ll}
                        \frac{1}{\alpha n}, & \hbox{for $i \in H$;} \\
                        0, & \hbox{for $i \not\in H$.}
                      \end{array}
                    \right.$
\item $|H| = (1-\alpha)n$.
\end{enumerate}
\end{theorem}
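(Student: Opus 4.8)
The plan is to exhibit such a mechanism explicitly and verify its two properties directly. Fix any subset $H \subseteq [n]$ with $|H| = (1-\alpha)n$ (the choice is arbitrary, and integrality of $\alpha n$ and $(1-\alpha)n$ can be handled by rounding, affecting only the constants). The mechanism computes the partial sum $s_H = \sum_{i \in H} b_i$, which it is allowed to access, and must estimate the full sum $s = s_H + \sum_{i \notin H} b_i$. Since the $\alpha n$ bits outside $H$ are never inspected, the best the estimator can do for them is to guess their midpoint, so I would output
$$\hat{s} = s_H + \frac{\alpha n}{2} + \mathrm{Lap}(\alpha n),$$
where the Laplace noise supplies differential privacy. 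That $\epsilon_i = 0$ for $i \notin H$ is immediate, since $\hat{s}$ does not depend on those bits at all.

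For the privacy guarantee, I would invoke the standard Laplace mechanism. For $i \in H$, flipping $b_i$ changes $s_H$ by exactly $1$, so the two output distributions are Laplace densities of scale $\alpha n$ shifted by $1$; the supremum over events $S$ of the ratio of their probabilities is $\exp(1/(\alpha n))$, which gives exactly $\epsilon_i = 1/(\alpha n)$ as the minimal privacy parameter in the sense of Definition \ref{diff-privacy}. Together with $\epsilon_i = 0$ for $i \notin H$, this establishes both conditions (1) and (2) on the set $H$.

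The accuracy bound is where the constants must be calibrated. I would decompose the total error into a deterministic bias and a random noise term:
$$|\hat{s} - s| \le \left| \frac{\alpha n}{2} - \sum_{i \notin H} b_i \right| + |\mathrm{Lap}(\alpha n)| \le \frac{\alpha n}{2} + |\mathrm{Lap}(\alpha n)|,$$
where the bias is at most $\alpha n/2$ because $\sum_{i \notin H} b_i \in [0, \alpha n]$. It then suffices to control the noise term: using the Laplace tail bound $\Pr[|\mathrm{Lap}(\sigma)| \ge t] = \exp(-t/\sigma)$ with $\sigma = \alpha n$ and $t = (\ln 3)\,\alpha n$ gives
$$\Pr\big[|\mathrm{Lap}(\alpha n)| \ge (\ln 3)\,\alpha n\big] = \exp(-\ln 3) = \tfrac{1}{3}.$$
Combining the two pieces yields $\Pr[\,|\hat{s}-s| \ge (\tfrac{1}{2}+\ln 3)\alpha n\,] \le 1/3$, which is precisely the claimed accuracy.

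The argument is short, and the only real subtlety — the main thing to get right — is the accounting that produces the constant $\tfrac{1}{2}+\ln 3$. The $\tfrac{1}{2}$ is forced by having to guess the midpoint of the $\alpha n$ bits the mechanism is not allowed to examine, while the $\ln 3$ is exactly the point at which the symmetric Laplace tail at noise scale $\alpha n$ drops to the target failure probability $1/3$. The required per-individual privacy level $1/(\alpha n)$ pins the noise scale to $\alpha n$, and this in turn pins the accuracy constant; this tight coupling is what makes the bound match the necessary conditions of Theorem \ref{thm:necessary} up to small constant factors.
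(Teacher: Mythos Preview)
Your proposal is correct and follows essentially the same approach as the paper: the same mechanism $\hat{s} = \sum_{i\in H} b_i + \alpha n/2 + \mathrm{Lap}(\alpha n)$, the same bias/noise decomposition for accuracy, and the same Laplace-mechanism analysis for the privacy parameters. The paper's write-up is slightly terser but the argument is identical.
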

\begin{proof}
Let $H \subset [n]$ be any collection of individuals of size $|H| = (1-\alpha)n$, selected independently of their private bits $b_i$, and let $t = \sum_{i \in H}b_i + \alpha n/2$. Observe that for any database $D$, $|t-s| \leq \alpha n/2$. Consider the mechanism that outputs $\hat{s} = t + \textrm{Lap}(\alpha n)$. First, we claim that this mechanism is $(1/2 + \ln 3)\alpha n$-accurate. This follows by the triangle inequality conditioned on the event that Lap$(\alpha n) \leq (\ln 3)\alpha n$. It remains to verify that this holds with probability at least $2/3$. This is in fact the case:
\begin{eqnarray*}
\Pr[|\textrm{Lap}(\alpha n)| \geq (\ln 3)\alpha n]=\frac{1}{2\alpha n}\int_{-\infty}^{-(\ln 3)\alpha n} \exp\left(-\frac{|x|}{\alpha n}\right)\, dx \\
+~\frac{1}{2\alpha n}\int_{(\ln 3)\alpha n}^\infty \exp\left(-\frac{|x|}{\alpha n}\right)\, dx\\
= \frac{1}{3}.\qquad \qquad\qquad \qquad \qquad \qquad
\end{eqnarray*}
We now verify the differential privacy guarantee, which follows from the analysis given in \cite{DMNS06} of the Laplace mechanism. Let $\hat{s}$ be the estimate calculated on database $D$ (via sum $t$) and let $\hat{s}'$ be the estimate calculated on neighboring database $D^{(i)}$ (via sum $t'$). Clearly, for any $i \not\in H$ and for any $S \subset \bR$, $\Pr[\hat{s} \in S] = \Pr[\hat{s}' \in S]$ and so $\epsilon_i = 0$. Now consider some $i \in H$ and $S \subset \bR$. For any $S \subset \bR$ and $r \in \bR$, let $S - r$ denote $\{x- r : x \in S\}$.
\begin{eqnarray*}
\Pr[\hat{s}\in S] &=& \Pr[\textrm{Lap}(\alpha n) \in S - t] \\
&=&\int_{x \in S-t}\frac{1}{2\alpha n}\exp\left(-\frac{|x|}{\alpha n}\right)\, dx \\
&\leq& \exp\left(\frac{1}{\alpha n}\right) \cdot \int_{x \in S-t'}\frac{1}{2\alpha n}\exp\left(-\frac{|x|}{\alpha n}\right)\, dx \\
&=&  \exp\left(\frac{1}{\alpha n}\right) \cdot\Pr[\hat{s}'\in S]
\end{eqnarray*}
where the inequality follows from the fact that $|t-t'| \leq 1$.
\end{proof}
Theorems \ref{thm:sufficient} and \ref{thm:necessary} taken together have the effect of greatly simplifying the space of possible mechanisms for private data that we need to consider. They imply that without loss of generality (up to small constant factors in their error term), when searching for $\alpha n$-accurate mechanisms, we may restrict our attention to a special class of multi-unit procurement auctions, where we seek to purchase exactly $1/\alpha n$ units of some good (in this case, differential privacy) from exactly $(1-\alpha )n$ individuals. Once we do this, we have purchased a sufficient quantity of privacy to run the Laplace mechanism employed in Theorem \ref{thm:sufficient}, which guarantees the desired accuracy! In the next section, we consider such mechanisms.

We note that at first blush, one might expect to be able to get an accurate estimate while setting $\epsilon_i = 0$ for most individuals, by taking a random and (with high probability) representative sample of the individuals, and operating only on their sampled bits. However, note that an algorithm $A$ which takes a random sample $S$ of the population, and then runs an $\epsilon$-differentially private algorithm $A'$ on the sample $S$ results in a privacy cost $\epsilon_i > 0$ to every individual who has a non-zero probability of having been selected for the sample $S$, so long as $\epsilon > 0$. That is, sampling (and other randomization procedures) can be a part of the algorithms that we consider, but do not escape our lower bounds.

\section{Deriving Truthful Mechanisms in the Insensitive Value Model}
\label{sec:mechs}
We now give several positive results in the insensitive value model.
\subsection{Maximizing Accuracy Subject to a Budget Constraint}
\label{sec:budget}
In this section, following the characterization of accurate mechanisms in Section \ref{sec:characterize}, we restrict our attention to algorithms that guarantee $O(\alpha n)$-accuracy by purchasing $1/\alpha n$ units of privacy from exactly $(1-\alpha)n$ individuals. We consider the problem of obtaining an estimate $\hat{s}$ of maximum accuracy, subject to a hard budget constraint\footnote{This question is related to the problem of designing budget
feasible mechanisms in~\cite{Sin10,CGL10,EG11}, but differs in that our
privacy auction has {\em externalities}: a seller's cost for her good
is a function of how many other sellers are chosen as winners by the
mechanism.}: $\sum_{i=1}^n p_i \leq B$. This is a natural objective, for example, in the case of a data analyst who has $B$ dollars of grant money with which to buy data for a study, and wishes to buy the most accurate data that he can afford. We give a truthful and individually rational mechanism for this problem, and show that it is instance-by-instance optimal among the class of envy-free mechanisms.
\begin{algorithm}
\label{alg:FairQuery}
\textbf{FairQuery}$(v, D, B):$
\begin{algorithmic}
\STATE \textbf{Sort} $v$ such that $v_1 \leq v_2 \leq \ldots \leq v_n$.
\STATE \textbf{Let} $k$ be the largest integer such that $c(v_k, \frac{1}{n-k}) \leq \frac{B}{k}$.
\STATE \textbf{Output} $\hat{s} = \sum_{i=1}^kb_i + \frac{n-k}{2} + \textrm{Lap}(n-k)$
\STATE \textbf{Pay} each $i > k$ $p_i = 0$ and each $i \leq k$ $p_i = \min(\frac{B}{k}, c(v_{k+1},\frac{1}{n-k})$.
\end{algorithmic}
\end{algorithm}

We first prove that FairQuery is truthful and individually rational.
\begin{theorem}
FairQuery is truthful and individually rational, and never exceeds the data analyst's budget $B$.
\end{theorem}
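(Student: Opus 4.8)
The plan is to dispatch the budget and individual-rationality claims quickly and then concentrate on truthfulness, which is the substantive part. For the \textbf{budget} constraint, observe that every paid agent receives the same price $p_i \le \frac{B}{k}$ and there are exactly $k$ of them, so $\sum_i p_i \le k\cdot\frac{B}{k} = B$. For \textbf{individual rationality}, note that a winner $i \le k$ has her bit used with privacy parameter $\epsilon_i = \frac{1}{n-k}$ (the Laplace noise has scale $n-k$ and the sum has sensitivity $1$), while each loser $i > k$ has $\epsilon_i = 0$ and is paid $0$, so by the normalization $c(v_i,0)=0$ a loser gets exactly zero utility. For a winner I would chain inequalities: since the reports are sorted, $v_i \le v_k \le v_{k+1}$, so by the assumed monotonicity of $c(\cdot,\epsilon)$ in its first argument, $c(v_i,\frac{1}{n-k}) \le c(v_k,\frac{1}{n-k}) \le \frac{B}{k}$ (the last inequality being exactly the defining property of $k$) and also $c(v_i,\frac{1}{n-k}) \le c(v_{k+1},\frac{1}{n-k})$. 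Hence $c(v_i,\frac{1}{n-k}) \le \min(\frac{B}{k}, c(v_{k+1},\frac{1}{n-k})) = p_i$, which is individual rationality.

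For \textbf{truthfulness} I would fix an agent $i$ and everyone else's reports $v_{-i}$, writing $w_1 \le \cdots \le w_{n-1}$ for the sorted reports of the other agents, and treat this as a single-parameter problem. The goal is to verify the two standard Myerson-style conditions: that the selection rule is \emph{monotone} in $i$'s reported value, and that the price a selected agent is paid equals her \emph{critical value} and does not depend on her own report. The enabling observation is that when $i$ wins, the $k$ winners are $i$ together with the $k-1$ cheapest others, so the first excluded agent is precisely the $k$-th cheapest other, pinning down $v_{k+1} = w_k$ and hence the offered price independently of $i$'s exact bid, provided that bid keeps her selected. Monotonicity follows because lowering $i$'s report only lowers the $k$-th smallest reported value, which relaxes the defining inequality $c(v_k,\frac{1}{n-k}) \le \frac{B}{k}$, so $k$ can only weakly increase and a winner keeps winning. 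Granting these, a winner's utility $p_i - c(v_i,\frac{1}{n-k})$ is nonnegative by the IR computation and is unchanged by any report that keeps her winning; a truthful loser has utility $0$ and, to gain, would have to underbid into the winning set, but there the price offered is at most $c(v_{k+1},\frac{1}{n-k})$ at a marginal excluded agent whose value can be shown to be at most her own true $v_i$, so by monotonicity of $c$ her payment cannot exceed her true cost and no such deviation helps.

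The main obstacle, and the reason this does not reduce verbatim to a textbook single-item procurement auction, is the \emph{externality}: the privacy level $\frac{1}{n-k}$ charged to every winner---and therefore each agent's own cost $c(v_i,\frac{1}{n-k})$---is a function of the endogenous winner count $k$. A deviating agent can in principle shift $k$, simultaneously changing the price she is offered, the privacy she must surrender, and the cost everyone bears. The delicate step is thus to show that no manipulation of $k$ ever helps; I would establish this by a case analysis on whether the deviation moves $i$ across the selection threshold, using the fact that when $i$ remains interior the most expensive winner's value $v_k$ equals $w_{k-1}$, the value of an \emph{other} agent, so $k$ and the offered price are invariant to $i$'s report. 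Handling the boundary case, in which $i$ is herself the marginal winner and her report genuinely moves $k$, is where the argument requires the most care and where I expect the bulk of the work to lie.
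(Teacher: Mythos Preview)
Your budget and individual-rationality arguments match the paper's exactly. For truthfulness, the paper dispenses with the Myerson scaffolding and simply runs a four-case analysis, splitting on whether $v_i' < v_i$ or $v_i' > v_i$ and on whether $i$ was truthfully a winner ($p_i > 0$) or a loser ($p_i = 0$); in each cell it verifies directly that utility cannot increase. Your proposal ultimately arrives at the same case analysis (you say you will split on whether the deviation crosses the selection threshold), so the two approaches converge in substance.

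The two key facts you have already isolated---that for an interior winner the value $v_k$ and hence $k$ and the payment are pinned down by \emph{other} agents' reports, and that a loser who underbids into the winner set faces a first-excluded value $v_{k'+1} \le v_i$ and hence a payment $\min(B/k', c(v_{k'+1},\tfrac{1}{n-k'})) \le c(v_i,\tfrac{1}{n-k'})$---are precisely the content of the paper's Cases~1/4 and Case~3 respectively. Your Myerson framing adds organizational overhead without real payoff, since, as you correctly note, the externality through $k$ means there is no clean single-parameter ``critical value'' to point to; the paper's bare four-case split is the shorter route to the same conclusion and handles your ``boundary case'' without singling it out.
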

\begin{proof}
First note that by the analysis from Theorem \ref{thm:sufficient}, for any $i \leq k$, $\epsilon_i = \frac{1}{n-k}$, and for any $i > k$, $\epsilon_i = 0$. For $i > k$ therefore, $p_i = c(v_i, 0) = 0$. For $i \leq k$, $p_i = \min(\frac{B}{k}, c(v_{k+1},\frac{1}{n-k})) \geq c(v_i, \frac{1}{(n-k)})$ because $c(v_i,\frac{1}{(n-k)}) \leq B/k$ by construction and $v_i \leq v_{i+1}$ by definition (recall that $c(v,\epsilon)$ is increasing in  $v$ for every $\epsilon$). Hence, individual rationality is satisfied. Note also that $\sum_{i=1}^np_i = k\cdot \min(\frac{B}{k}, c(v_{k+1},\frac{1}{n-k})) \leq B$, and so the budget constraint is also satisfied. It remains to verify truthfulness:

Fix any $v, i, v_i'$ and consider $k = k(v)$, $k' = k(v_{-i}, v_i')$, $p_i = p_i(v)$, $p'_i = p'_i(v_{-i}, v_i')$, $\epsilon_i =\epsilon_i(v)$, and $\epsilon_i' = \epsilon_i'(v_{-i}, v_i')$. There are four cases:
\begin{enumerate}
\item Case 1: $v_i' < v_i$ and $p_i > 0$. In this case, $v_i'$ moves earlier in the ordering and $\epsilon_i = \epsilon_i'$, and $p_i = p_i'$.
\item Case 2: $v_i' > v_i$ and $p_i = 0$. In this case, $v_i'$ moves later in the ordering and $\epsilon_i = \epsilon_i' = p_i = p_i' = 0$.
\item Case 3: $v_i' < v_i$ and $p_i = 0$. In this case, $v_i'$ moves earlier in the ordering, but if $p_i' > 0$ then by construction $p_i' = \min(\frac{B}{k'}, c(v_{k'+1}, \frac{1}{n-k'})) \leq c(v_i, \frac{1}{(n-k')})$. This follows because $k'$ is such that $v_{k'+1} \leq v_i$ for all $i > k$ such that $p_i' > 0$.
\item Case 4: $v_i' > v_i$ and $p_i > 0$. In this case, $v_i'$ moves later in the ordering, and either $p_i' = p_i$ and $\epsilon_i' = \epsilon_i$, or $p_i' = 0$ and $\epsilon_i = 0$.  In the second case, by individual rationality, $p_i - c(v_i, \epsilon_i) \geq 0 = p_i' - c(v_i,\epsilon_i')$.
\end{enumerate}
Thus in all four cases, deviations are not beneficial, and the mechanism is truthful.
\end{proof}

The next natural question to ask is: does FairQuery guarantee the data analyst a good level of accuracy, given his budget? As is always the case in prior-free mechanism design, it is important to specify what our benchmark is -- good compared to what? Because mechanisms of the kind that we are considering always buy the same amount of privacy from an individual from whom they buy any privacy at all, a natural benchmark to consider is the set of all ``envy-free'' mechanisms which guarantee that no individual would prefer the outcome granted to any other.

\begin{definition}
A mechanism for private data is \emph{envy-free} if for all possible valuation vectors $v$, and for all individuals $i, j$, $p_i-c(v_i,\epsilon_i) \geq p_j - c(v_i, \epsilon_j)$. That is, after the mechanism has determined the privacy costs and payments to each individual, there are no individuals who would prefer to have the payment and privacy cost granted to any other individual.
\end{definition}
\begin{observation}
Any truthful envy-free mechanism which buys either no privacy or $\epsilon$-privacy from each individual (\ie, if $\epsilon_i > 0, \epsilon_j > 0$ then $\epsilon_i = \epsilon_j$) must have the property that for all $i,j$ with $\epsilon_i > \epsilon_j > 0$, $p_i = p_j$. Call such mechanisms \emph{fixed purchase mechanisms}. That is, envy free fixed purchase mechanisms must pay each individual from whom privacy is purchased the same fixed price.
\end{observation}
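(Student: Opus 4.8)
The plan is to read the \emph{envy-free} condition twice, once in each direction, for a pair of individuals $i,j$ from whom privacy is actually purchased, and to exploit the fact that the ``fixed purchase'' hypothesis forces $\epsilon_i = \epsilon_j$. First I would fix any such pair with $\epsilon_i > 0$ and $\epsilon_j > 0$; by hypothesis these must be equal, so I write $\epsilon_i = \epsilon_j = \epsilon$. (I note in passing that the condition ``$\epsilon_i > \epsilon_j > 0$'' as written in the statement is vacuous under this hypothesis and should be read as ``$\epsilon_i, \epsilon_j > 0$''; the content is simply that any two individuals from whom privacy is bought receive equal payment.)

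The key step is that, because both individuals are assigned the \emph{same} privacy level $\epsilon$, the cost terms appearing in the two envy-free inequalities collapse. Applying the definition of envy-freeness with roles $(i,j)$ gives
\[
p_i - c(v_i,\epsilon_i) \;\ge\; p_j - c(v_i,\epsilon_j),
\]
and since $\epsilon_i = \epsilon_j = \epsilon$, the two cost terms $c(v_i,\epsilon_i)$ and $c(v_i,\epsilon_j)$ are evaluated at the same $v_i$ and the same $\epsilon$, hence are identical and cancel, leaving $p_i \ge p_j$. Applying the definition again with the roles of $i$ and $j$ swapped yields $p_j - c(v_j,\epsilon) \ge p_i - c(v_j,\epsilon)$, that is $p_j \ge p_i$. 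Combining the two inequalities gives $p_i = p_j$, which is exactly the claim.

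It is worth emphasizing what this argument does and does not use. It relies only on envy-freeness together with the fixed-purchase structure; neither the truthfulness hypothesis nor the assumed total ordering of the cost family (that $c(v_i,\epsilon) \le c(v_j,\epsilon)$ if and only if $v_i \le v_j$) is needed, since the cancellation of the cost terms is purely a consequence of the two agents sharing a common value of $\epsilon$. Consequently there is no real obstacle here: the only point requiring care is matching the indices correctly across the two applications of the envy-free inequality, so that in each application the surviving cost term is evaluated at the \emph{same} $v$, guaranteeing the cancellation; the payment comparison in each direction then reads off immediately.
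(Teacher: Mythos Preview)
Your argument is correct and is exactly the natural verification the paper has in mind; the paper states this as an Observation without proof, and your two applications of the envy-free inequality with $\epsilon_i=\epsilon_j$ to cancel the cost terms is the intended one-line justification. Your parenthetical remark that the hypothesis ``$\epsilon_i>\epsilon_j>0$'' must be a typo for ``$\epsilon_i,\epsilon_j>0$'' is also right, and your observation that truthfulness is not actually used is a valid sharpening.
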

Note that by the characterization in Section \ref{sec:characterize}, we may restrict ourselves to considering fixed purchase mechanisms essentially without loss of generality (we may lose only a small constant factor in the approximation factor).
Therefore we can compare our mechanism to the envy free benchmark:
\begin{proposition}
For any set of valuations $v\in \bR_+^n$ (\ie, on an instance-by-instance basis) FairQuery achieves the optimal accuracy given budget $B$, among the set of all truthful, individually rational envy-free fixed purchase mechanisms.
\end{proposition}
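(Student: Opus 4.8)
The plan is to reduce the phrase ``optimal accuracy'' to the combinatorial statement ``purchases privacy from the largest possible number of individuals,'' and then to show that FairQuery maximizes that number. First I would observe that, by the characterization of Section~\ref{sec:characterize}, any fixed purchase mechanism that buys privacy from a set $H$ of size $m$ can guarantee accuracy no better than $\Theta(n-m)$ (by Theorem~\ref{thm:necessary}, accuracy $\alpha n/4$ forces $|H| \geq (1-\alpha)n$), and that this bound is achieved, up to the small constant already flagged, by buying the minimal sufficient privacy $\frac{1}{n-m}$ from each member of $H$ and adding $\textrm{Lap}(n-m)$ noise exactly as in Theorem~\ref{thm:sufficient}. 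Crucially, the resulting accuracy $(\frac{1}{2}+\ln 3)(n-m)$ is strictly decreasing in $m$, so \emph{maximizing accuracy is equivalent to maximizing the number of individuals from whom privacy is purchased}. Since FairQuery is itself a mechanism of precisely this form with $m = k$, it suffices to show that no valid competitor can purchase from more than $k$ individuals.

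Next I would bound the number of individuals any competing mechanism can buy from. By the Observation, an envy-free fixed purchase mechanism pays a single common price $p$ to each of the $m$ individuals in $H$. Individual rationality requires $p \geq c(v_i, \frac{1}{n-m})$ for every $i \in H$, and since $c(v,\cdot)$ is increasing in $v$, the cheapest way to fill $H$ with $m$ sellers is to take the $m$ individuals of smallest value, forcing $p \geq c(v_m, \frac{1}{n-m})$; any other choice of $H$ only raises the required price. The budget constraint $\sum_i p_i = m \cdot p \leq B$ then yields $c(v_m, \frac{1}{n-m}) \leq B/m$. But $k$ is by definition the \emph{largest} integer satisfying $c(v_k, \frac{1}{n-k}) \leq B/k$, so $m \leq k$. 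Combined with the monotonicity of accuracy in $m$ established above, this shows that FairQuery's accuracy is optimal on every instance $v$.

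The main obstacle is the externality that distinguishes this from an ordinary procurement auction: the privacy $\frac{1}{n-m}$ that each seller must provide, and hence each seller's cost $c(v_i, \frac{1}{n-m})$, depends on $m$, the total size of the winning set. The delicate step is therefore justifying that fixing $\epsilon_i = \frac{1}{n-m}$ for all winners is without loss of generality. This uses both directions of the characterization: buying \emph{less} than $\frac{1}{n-m}$ from anyone violates the necessity conditions of Theorem~\ref{thm:necessary} and strictly worsens accuracy, while buying \emph{more} cannot push accuracy below the $\Theta(n-m)$ floor imposed by the $|H| \geq (1-\alpha)n$ requirement and only tightens the budget constraint through a larger $c(v_m, \epsilon)$. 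Once this reduction is in hand, the feasibility of any competitor collapses exactly to FairQuery's defining inequality, and optimality follows from the maximality of $k$.
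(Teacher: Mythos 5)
Your proof is correct and takes essentially the same route as the paper: envy-freeness forces a single uniform price, individual rationality forces that price to be at least $c(v_m, \frac{1}{n-m})$ for any mechanism with $m$ winners, and the budget constraint then yields exactly FairQuery's defining inequality, so $m \leq k$ by maximality. The paper phrases this as a contradiction at $k+1$ (a competitor with $k' > k$ winners must pay everyone more than $\frac{B}{k+1}$, exceeding $B$) rather than your direct maximality argument, and it delegates the accuracy-to-cardinality reduction you spell out to the surrounding remarks on Section~\ref{sec:characterize}, but these are presentational differences only.
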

\begin{proof}
First, observe the easy fact that FairQuery is indeed an envy free fixed purchase mechanism. We then merely observe that for any vector of valuations $v$, if FairQuery sets $\epsilon_i > 0$ for $k$ individuals, then by the definition of $k$, it must be that $c(v_{k+1}, \frac{1}{(n-k-1)}) > \frac{B}{k+1}$, and so any mechanism that set $\epsilon_i > 0$ for $k'$ individuals for $k' > k$ must have $p_{k+1} > \frac{B}{(k+1)}$ by individual rationality. But by envy-freeness, it must have $p_i = p_{k+1} > \frac{B}{(k+1)}$ for all $i \leq k$. But in this case, we would have
$$\sum_{i=1}^np_i \geq k'\cdot p_{k+1} > (k+1)\cdot \frac{B}{k+1} = B$$
which would violate the budget constraint.
\end{proof}

\subsection{Minimizing Payment Subject to an Accuracy Constraint}
In this section, we consider mechanisms for the dual goal of truthfully obtaining a $k$-accurate estimate for some fixed accuracy constraint $k$ while minimizing the payment required. Again, we restrict ourselves to the model of multi-unit procurement auctions justified in Section \ref{sec:characterize}. In this setting, we show that the VCG mechanism is in fact optimal.

Recall that for a fixed accuracy goal $\alpha n$, by Theorem \ref{thm:sufficient}, it is sufficient to buy $\frac{(1/2 + \ln 3)}{\alpha n}$ units of privacy from $(1-\frac{\alpha}{(1/2 + \ln 3)})n$ people. We may therefore view our setting as a multi-unit procurement auction in which every individual is selling a single good ($\frac{(1/2 + \ln 3)}{\alpha n}$ units of privacy), for which they have valuation $c_i(v_i, \frac{(1/2 + \ln 3)}{\alpha n})$. The constraint on accuracy simply states that we must buy $(1-\frac{\alpha}{(1/2 + \ln 3)})n$ units of the good. In this case, we can analyze a simple application of the standard VCG mechanism:

\begin{algorithm}
\label{alg:VCG}
\textbf{MinCostAuction}($v, D, \alpha$):
\begin{algorithmic}
\STATE \textbf{Let} $\alpha' = \frac{\alpha}{1/2 + \ln 3}$ and $k = \lceil(1-\alpha')n\rceil$.
\STATE \textbf{Let} $w_i = c(v_i, \frac{1}{n-k})$.
\STATE \textbf{Sort} $w_i$ such that $w_1 \leq w_2 \leq \ldots \leq w_n$.
\STATE \textbf{Output} $\hat{s} = \sum_{i=1}^kb_i + \frac{n-k}{2} + \textrm{Lap}(\alpha'n)$
\STATE \textbf{Pay} each $i > k$ $p_i = 0$ and each $i \leq k$ $p_i = w_{k+1}$.
\end{algorithmic}
\end{algorithm}
We first show that MinCostAuction does indeed satisfy the constraints of truthfulness and individual rationality, while obtaining sufficient accuracy.
 \begin{proposition}
 MinCostAuction is truthful, individually rational and $\alpha n$-accurate.
 \end{proposition}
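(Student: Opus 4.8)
The plan is to verify the three properties separately, exploiting the one structural feature that distinguishes MinCostAuction from FairQuery: the number of winners $k = \lceil(1-\alpha')n\rceil$ depends only on $n$ and $\alpha$, not on the reported valuations $v$. Consequently the amount of privacy sold by each winner, $1/(n-k)$ units, is fixed in advance, so there are no externalities and the mechanism is literally a standard single-parameter $k$-unit reverse (procurement) auction in which agent $i$'s cost to sell is the fixed number $w_i = c(v_i, \frac{1}{n-k})$.

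First I would pin down the privacy parameters and the accuracy, both of which follow from the Laplace analysis of Theorem \ref{thm:sufficient}. Since $\hat s$ is a function of the bits $b_1,\ldots,b_k$ (with a fixed, $v$-determined choice of which $k$ bits) perturbed by $\textrm{Lap}(\alpha' n)$, and since the sensitivity of $t := \sum_{i=1}^k b_i + \frac{n-k}{2}$ to any single winning bit is $1$, the winners $i \le k$ get $\epsilon_i = \frac{1}{\alpha' n}$ and the losers get $\epsilon_i = 0$. For accuracy, because the $n-k$ omitted bits each lie in $\{0,1\}$ we have $|t - s| \le \frac{n-k}{2} \le \frac{\alpha' n}{2}$ (using $k \ge (1-\alpha')n$); conditioning on the event $|\textrm{Lap}(\alpha' n)| \le (\ln 3)\alpha' n$, which has probability $\ge 2/3$ by the identical integral computed in Theorem \ref{thm:sufficient}, the triangle inequality gives $|\hat s - s| \le \frac{\alpha' n}{2} + (\ln 3)\alpha' n = (\frac{1}{2} + \ln 3)\alpha' n = \alpha n$, which is exactly $\alpha n$-accuracy.

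Individual rationality is then immediate. Losers are paid $p_i = 0 = c(v_i, 0)$. For a winner $i \le k$, sorting gives $w_i \le w_{k+1}$, so $p_i = w_{k+1} \ge w_i = c(v_i, \frac{1}{n-k})$; and since the realized privacy parameter satisfies $\epsilon_i = \frac{1}{\alpha' n} \le \frac{1}{n-k}$ (because $n-k \le \alpha' n$) and $c(v_i,\cdot)$ is nondecreasing, we get $c(v_i, \epsilon_i) \le c(v_i, \frac{1}{n-k}) \le p_i$.

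The main content is truthfulness, which I expect to be the only step requiring care; the observation that makes it routine is precisely that $k$ is independent of the reports. Since $c(\cdot, \frac{1}{n-k})$ is monotone in $v$, reporting $v_i$ is equivalent to reporting the single-parameter cost $w_i$, the allocation rule ``win iff $w_i$ is among the $k$ smallest costs'' is monotone, and each winner is paid the critical (threshold) value at which it would cease to win, namely the $(k+1)$-st smallest cost $w_{k+1}$, which equals the $k$-th smallest cost among the other $n-1$ agents and is therefore unaffected by $i$'s own report as long as $i$ remains a winner. This is exactly the VCG/Vickrey threshold-payment mechanism for a $k$-unit procurement auction, so truthfulness follows from the standard single-parameter argument: a winner who under-reports stays a winner at the same payment, a winner who over-reports into the losing set forfeits nonnegative surplus, and a loser who under-reports to win would be paid the threshold, which lies below its true cost --- so no deviation helps. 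If a fully self-contained argument is preferred, the same conclusion can be reached by the four-case analysis used for FairQuery, which simplifies here because $k$, and hence $\epsilon_i$, never changes. The one subtlety to check carefully is that the payment the mechanism actually makes, $w_{k+1}$, coincides with the critical value for each winner --- a one-line verification once the fixed-$k$ structure is in hand.
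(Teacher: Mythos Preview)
Your proof is correct and follows the same line as the paper's: accuracy via Theorem~\ref{thm:sufficient}, identification of the privacy parameters $\epsilon_i \in \{0,\, 1/(\alpha' n)\}$, and truthfulness/individual rationality via the observation that, because $k$ is fixed independently of the reports, the mechanism is the textbook VCG $k$-unit procurement auction on the single-parameter costs $w_i$. The paper's own proof is simply a terser version of what you wrote, invoking the classical VCG mechanism by name rather than spelling out the threshold-payment argument.
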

 \begin{proof}
 That MinCostAuction is $\alpha n$-accurate follows immediately from Theorem \ref{thm:sufficient}. Moreover, by Theorem \ref{thm:sufficient}, for each $i \leq k$, $\epsilon_i = 1/(\alpha' n)$ and for $i > k$, $\epsilon_i =0$. Truthfulness and individual rationality then follow immediately from the fact that each $w_i = c_i(v_i, 1/(\alpha'n))$, that $c_i(\cdot, \epsilon)$ is an increasing function of $v_i$, and that MinCostAuction is an instantiation of the classical VCG mechanism.
\end{proof}
MinCostAuction achieves its target utility at a cost of $\sum_{i=1}^np_i = k\cdot w_{k+1}$. We now show that no other envy-free multi-unit procurement auction with the same accuracy guarantees (i.e. one that guarantees buying $k$ units) makes smaller payments than MinCostAuction.
\begin{theorem}
No truthful, individually rational, envy-free fixed purchase auction that guarantees purchasing $k$ units can have total payment less than $k\cdot w_{k+1}$.
\end{theorem}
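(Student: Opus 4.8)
The plan is to first reduce an arbitrary envy-free fixed purchase auction to a posted-price (threshold) mechanism, and then use truthfulness together with the promise of buying $k$ units to force the single price up to $w_{k+1}$. Throughout, write $w_i = c(v_i, \frac{1}{n-k})$ and sort so that $w_1 \leq w_2 \leq \cdots \leq w_n$. First I would record the structural consequence of envy-freeness for a fixed purchase mechanism: if $i$ is an individual from whom privacy is purchased (a winner, paid the common price $p$ at privacy level $\frac{1}{n-k}$) and $j$ is one from whom it is not (a loser, paid $0$ at privacy level $0$), then envy-freeness applied in the direction $i \to j$ gives $p - c(v_i,\epsilon_i) \geq 0$, i.e. $p \geq w_i$ (this is just individual rationality), while applied in the direction $j \to i$ it gives $0 \geq p - c(v_j, \epsilon_i)$, i.e. $w_j \geq p$. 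Hence there is a single threshold price $p = p(w)$ with $w_i \leq p$ for every winner and $w_j \geq p$ for every loser: the mechanism is a posted-price mechanism that pays the common price $p$ to each of its (at least $k$) winners. Consequently the total payment equals (number of winners) $\cdot\, p \geq k \cdot p(w)$, so it suffices to prove $p(w) \geq w_{k+1}$.

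For the main inequality I would argue by contradiction, assuming $p(w) < w_{k+1}$. Since every bidder $j \geq k+1$ has cost $w_j \geq w_{k+1} > p(w)$, only bidders $1,\ldots,k$ can have cost at most the price; as the mechanism must buy $k$ units and (being posted-price) buys exactly those below its price, all of $1,\ldots,k$ must win. In particular bidder $k$ wins truthfully and enjoys utility $p(w) - w_k$. Now fix any target $x_0$ with $w_k < x_0 < w_{k+1}$ and consider the deviated profile $(w_{-k}, x_0)$ in which bidder $k$ reports the higher value whose induced cost is $x_0$. In this profile the sorted costs are $w_1 \leq \cdots \leq w_{k-1} \leq x_0 < w_{k+1} \leq \cdots$, so the $k$-th smallest cost is exactly $x_0$ while the $(k+1)$-st is $w_{k+1} > x_0$. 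Because the mechanism must again buy $k$ units, at least $k$ bidders must have cost at most the new price $p(w_{-k}, x_0)$; since only $k$ bidders have cost $\leq x_0$ and the next cheapest costs strictly more, this forces $p(w_{-k}, x_0) \geq x_0$. In particular bidder $k$ remains a winner on the deviated profile.

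I would then close with one truthfulness inequality. The true cost of bidder $k$ is $w_k$, so truthful reporting yields utility $p(w) - w_k$, while the deviation above makes bidder $k$ a winner at privacy level $\frac{1}{n-k}$, hence with cost $c(v_k,\frac{1}{n-k}) = w_k$ and utility $p(w_{-k},x_0) - w_k$. Truthfulness gives $p(w) - w_k \geq p(w_{-k},x_0) - w_k$, so $p(w) \geq p(w_{-k},x_0) \geq x_0$. Letting $x_0 \to w_{k+1}$ yields $p(w) \geq w_{k+1}$, contradicting the assumption. Therefore $p(w) \geq w_{k+1}$ and the total payment is at least $k \cdot w_{k+1}$.

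The step I expect to be the main obstacle — and the one that genuinely invokes the $k$-unit guarantee rather than merely envy-freeness and individual rationality — is showing that bidder $k$'s \emph{upward} deviation still produces a winning outcome. Envy-freeness and individual rationality by themselves only force $p(w) \geq w_k$ (the weaker bound $k \cdot w_k$); it is precisely the requirement that $k$ units be bought on the deviated profile, where the inflated cost $x_0$ has become the $k$-th cheapest, that raises the threshold to $w_{k+1}$. I would also take brief care with ties among the $w_i$ (if $w_k = w_{k+1}$ the claim is immediate from individual rationality, since then $k \cdot w_k = k \cdot w_{k+1}$) and with the bijection between a reported parameter and its induced cost $w' = c(\cdot,\frac{1}{n-k})$, available because $c(\cdot,\epsilon)$ is continuous and strictly increasing in its first argument; these are routine.
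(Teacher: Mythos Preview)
Your proof is correct and uses the same ingredients as the paper's---envy-freeness to obtain a single posted price, an upward deviation by a cheap bidder, the $k$-unit purchase constraint, and truthfulness---but you run the deviation argument in the opposite direction. The paper has a winner $i$ with $p_i<w_{k+1}$ raise its bid, argues (via truthfulness and individual rationality) that $i$ must become a \emph{loser} on the deviated profile, and then invokes the $k$-unit constraint to force some bidder $j\ge k+1$ into the winning set, whence envy-freeness pushes the price up to $w_{k+1}$. You instead have bidder $k$ raise its bid to $x_0\in(w_k,w_{k+1})$, invoke the $k$-unit constraint together with the posted-price structure to show that $k$ must \emph{remain} a winner at price $p'\ge x_0$, and then use truthfulness at the original profile to get $p\ge p'\ge x_0\to w_{k+1}$. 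Your route is arguably a bit cleaner: the contradiction lands directly on the original profile rather than on the deviated one, and you never need to reason about which expensive bidder enters. Both arguments are short and equivalent in strength; the only point to be slightly careful about (which you handle) is that the posted-price structure alone gives $w_j\ge p'$ for losers, so a bidder with cost strictly below $p'$ cannot lose, and when $x_0=p'$ the $k$-unit count still forces bidder $k$ into the winning set.
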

\begin{proof}
For the sake of contradiction, suppose we have such a mechanism $M$. Fix some vector of valuations $v$ that yields payments $p(v)$ such that $\sum_{i=1}^np_i(v) < k\cdot w_{k+1}$ (Recall that $w_i = c(v_i, \frac{1}{n-k})$). First, if it is not already the case, we will construct a bid profile such that an item is purchased from some seller who is not among the $k$ lowest bidders. It must be that there exists some $i$ such that an item is purchased from $i$ at a price of $p_i$, such that $w_i \leq p_i < w_{k+1}$ (otherwise $\sum_{i=1}^np_i(v) \geq k\cdot w_{k+1}$). Let $v' = (v_{-i}, v_i')$ where $c(v_i', \frac{1}{n-k}) = (p_i + w_{k+1})/2$, be a bid profile in which bidder $i$ raises his reported value to be above $p_i$ while remaining below $w_{k+1}$. Note that such a $v_i'$ exists since $c$ is assumed to be continuous. Let $p' = p'(v)$ be the new payment vector. By individual rationality and truthfulness, it must be that in this new bid profile $v'$, player $i$ is no longer allocated an item: by individual rationality, he would have to be paid $p'_i > p_i$ if he were allocated an item, but if his true valuation were $w_i$, then this would be a beneficial deviation, contradicting truthfulness. Because the mechanism is constrained to always buy at least $k$ items, it must be that in $v'$, an item is now purchased from some seller $j$ such that $j \geq k+1$. By individual rationality, $p'_j \geq w_{j} \geq w_{k+1}$. But by envy-freeness, it must be that for every seller $i$ from whom an item was purchased, $p'_i = p'_j \geq w_{k+1}$. Because at least $k$ items are purchased, we therefore have $\sum_{i=1}^np'_i \geq k\cdot w_{k+1}$, which contradicts the purported payment guarantee of mechanism $M$.
\end{proof}

\section{Truthful Mechanisms in the Sensitive Value Model}
\label{sec:privatebids}
In Section \ref{sec:mechs}, we considered truthful, individually rational mechanisms that compensated users for the privacy loss due to the mechanisms' use of the individual's private bits $b_i$, but \emph{not} due to the mechanisms' use of their valuations for privacy, $v_i$. Nevertheless, as we observed in the introduction, it is quite reasonable to assume that individual's valuations for privacy are correlated with their private bits. Can we design mechanisms that treat individuals' valuations for privacy as private data as well, and compensate individuals for the privacy loss due to the use of their valuations $v_i$? In this section, we show that the answer is generically `no' if we allow individuals to have arbitrarily high valuations for privacy. Moreover, we note that if we try to impose an a-priori bound on individual's valuations for privacy, then we re-introduce the same source of sampling bias that we had hoped to solve by running an auction.

Recall that a mechanism has two outputs: the estimate $\hat{s}$, and the payment $P$ that the data analyst must make. Note that if the bids are private data as well (i.e. if we are in the sensitive value model), then a mechanism which is $\epsilon_i$-differentially private with respect to bidder $i$ must satisfy, for every set of estimate/payment tuples $S \subset \bR_+^2$  and for each $(v, D) \in \bR_+^n\times\{0,1\}^n$, $\Pr[M(v,D) \in S] \leq \exp(\epsilon_i)\Pr[M(v^{(i)},D^{(i)}) \in S]$, where $v^{(i)}$ and $D^{(i)}$ are arbitrary vectors that are identical to $v$ and $D$ everywhere except possibly on their $i$th index.

\begin{theorem}
\label{thm:impossible}
If bidder valuations for privacy may be arbitrarily large (\ie, $v \in \bR_+^n$) then no individually rational direct revelation mechanism $M$ can protect the privacy of the bidder valuations and promise $k$-accuracy for any $k < n/2$ (\ie, any nontrivial value).
\end{theorem}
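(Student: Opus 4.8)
The plan is to combine two facts that pull in opposite directions once valuations are unbounded: \emph{accuracy} forces the mechanism to use at least one individual's data with a strictly positive privacy parameter, while \emph{individual rationality} then forces the total payment---which is part of the mechanism's output---to grow without bound in that individual's reported value $v_i$. Since the mechanism must also be $\epsilon_i$-differentially private with respect to $v_i$, its output distribution cannot change by more than an $\exp(\epsilon_i)$ factor as $v_i$ varies, and an unbounded payment is incompatible with this bounded-ratio constraint. I would set this up as a proof by contradiction: assume $M$ is individually rational, protects the privacy of valuations (so every $\epsilon_i$ is finite), and is $k$-accurate for some $k < n/2$.

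First I would establish that some individual $i$ must have $\epsilon_i > 0$. Suppose instead $\epsilon_i = 0$ for all $i$. Then by Fact~\ref{fact:compose} applied with $T = [n]$, the output distribution of $M$ is \emph{identical} on every pair of databases, and in particular the marginal distribution of the estimate $\hat s$ does not depend on $b$. Taking $b = \mathbf 0$ (so $s = 0$) and $b = \mathbf 1$ (so $s = n$), $k$-accuracy would require both $\Pr[\hat s < k] \ge 2/3$ and $\Pr[\hat s > n-k] \ge 2/3$ for one and the same distribution of $\hat s$; since $k < n/2 \le n-k$ these events are disjoint, so their probabilities cannot both exceed $2/3$. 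This contradiction shows at least one individual---call her $i$---has a fixed privacy parameter $\epsilon_i$ with $0 < \epsilon_i < \infty$.

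Next I would derive the collision with differential privacy. Fix the bits $b$ and the reported values $v_{-i}$ of everyone except $i$, and regard the output $(\hat s, P)$ as a function of $v_i$ alone. At $v_i = 0$ the mechanism induces a \emph{fixed} distribution $\mu_0$ on $(\hat s, P)$, so its payment marginal has a finite quantile: there is a threshold $\tau$ with $\Pr_{\mu_0}[P \le \tau] \ge 1/2$. Because $c(v_i,\epsilon_i)$ is increasing in $v_i$ and, for the cost families of interest, grows without bound as $v_i \to \infty$ at the fixed level $\epsilon_i > 0$, I can choose $v_i^\star$ large enough that $c(v_i^\star,\epsilon_i) > \tau$. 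Ex-post individual rationality then forces $p_i \ge c(v_i^\star,\epsilon_i) > \tau$ in every realization, and since payments are nonnegative we get $P \ge p_i > \tau$ with probability $1$ under $\mu_{v_i^\star}$; that is, $\Pr_{\mu_{v_i^\star}}[P \le \tau] = 0$. But $v_i^\star$ and $0$ differ only in coordinate $i$ (with $b$ unchanged), so the privacy guarantee applied to the event $S = \{(\hat s, P) : P \le \tau\}$ gives $\tfrac12 \le \Pr_{\mu_0}[S] \le \exp(\epsilon_i)\,\Pr_{\mu_{v_i^\star}}[S] = 0$, the desired contradiction.

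The main obstacle---and the step I would take most care over---is the third one, where I must convert the individual-rationality lower bound on the \emph{unobserved} payment $p_i$ into a statement about the \emph{observed} total $P$. This goes through precisely because $P$ is a coordinate of the mechanism's output, payments are nonnegative (so $P \ge p_i$), and ex-post individual rationality holds for \emph{all} realizations rather than merely in expectation; in expectation alone the argument would fail, since bounded likelihood ratios are compatible with large changes in the mean. Two hypotheses are used in an essential way and are worth flagging: finiteness of $\epsilon_i$ (\ie, that privacy is genuinely protected), without which $\exp(\epsilon_i) = \infty$ and no contradiction arises, and unboundedness of valuations (so that $c(v_i,\epsilon_i)$ can be pushed past any fixed $\tau$), which is exactly the hypothesis the theorem highlights and the reason an a-priori bound on valuations escapes the impossibility.
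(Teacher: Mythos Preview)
Your proposal is correct and follows essentially the same three-step structure as the paper's proof: use accuracy to force a strictly positive privacy loss, use individual rationality with unbounded valuations to force the observed payment $P$ to be arbitrarily large, and then derive a contradiction from the differential privacy constraint on $P$. The only cosmetic differences are that the paper works with linear costs and shows the aggregate bound $\sum_i \epsilon_i \ge \ln(4/3)$ (then lower-bounds $P$ via $\min_i v_i$), whereas you isolate a single index $i$ with $\epsilon_i>0$ and use a quantile-threshold argument on $P$; both routes land on the same ``probability zero versus probability at least $1/2$'' contradiction.
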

\begin{proof}
For simplicity, consider bidders with linear cost functions: $c_i(v_i,\epsilon_i) = v_i\cdot \epsilon_i$. Assume that $M$ is $k$-accurate for some $k < n/2$. Run the mechanism $M(v, D)$ and obtain an estimate $\hat{s}$ and privacy costs $\epsilon_i$ for each $i \in [n]$. Let $P = \sum_{i=1}^np_i$ be the payment that the data analyst makes. By individual rationality, $P \geq \sum_{i=1}^nv_i\epsilon_i \geq \min_{i}v_i\cdot \sum_{i=1}^n\epsilon_i$. We trivially have that either $\Pr[\hat{s} \in [0,n/2)] \geq 1/2$ or $\Pr[\hat{s} \in [n/2, n]] \geq 1/2$. Without loss of generality, assume $\Pr[\hat{s} \in [0,n/2)] \geq 1/2$. Let $D' = 1^n$, and let $\hat{s}'$ be the estimate obtained by running $M(v,D')$. By accuracy, we have that: $\Pr[\hat{s}' \in (n/2, n]] \geq \frac{2}{3}$. However, by differential privacy, together with Fact \ref{fact:compose} we have:
{\footnotesize
$$\frac{2}{3} \leq \Pr[\hat{s}' \in (n/2, n]] \leq \exp(\sum_{i=1}^n\epsilon_i)\Pr[\hat{s} \in (n/2, n]]$$$$ \leq \frac{\exp(\sum_{i=1}^n\epsilon_i)}{2}$$
}
Solving, we find that $\sum_{i=1}^n\epsilon_i \geq \ln(4/3)$, independent of $v$. We therefore have by individual rationality that $\Pr[P \in [0,\ln(4/3)\min_iv_i)] = 0$. By differential privacy, this must hold simultaneously for all inputs to the mechanism $(v,D)$: that is, such a mechanism can not charge a finite price $P$ for \emph{any} input, which completes the proof.
\end{proof}
\begin{remark}
A natural (partial) way around the impossibility result of Theorem \ref{thm:impossible} is to restrict bidder valuations to lie in a bounded range (e.g. $[0,1]$). This is unsatisfying, however, because it re-introduces the very source of sampling bias that we wanted to solve by running an auction. That is, bidders who happen to value their privacy at a higher rate than allowed by the mechanism will simply not participate in the auction, which might systematically skew the resulting estimate in a way that we cannot measure.
\end{remark}

\section{Conclusion and Future Directions}
\label{sec:future}
The main contribution of this paper is to formalize the notion of auctions \emph{for} private data, and to show that the design space of such auctions can without loss of generality be taken to be the simple setting of multi-unit procurement auctions. This initiates an intriguing new area of study that raises many questions. Among these are:
\begin{enumerate}
\item What is the proper benchmark for auctions in our setting? In this paper, we used the class of fixed-price (or envy free) mechanisms, which has become standard in the field of prior-free mechanism design \cite{HR08, HK07}. Is there a more natural benchmark?
\item We have shown that generically, no direct revelation mechanism can compensate individuals for the loss of privacy which results from correlations between their private data and their reported costs for privacy. Nevertheless, such correlations exist! It is unsatisfying to restrict individual valuations for privacy to lie in a bounded range, because this reintroduces the very source of bias that we hoped to overcome by designing auctions. However, is there some restricted sense in which we can protect (and compensate users for) the privacy of their \emph{valuations for privacy}? This requires the development of new models.
\item We have assumed throughout this paper that the private bits of the users, $b_i$ are already known to some database administrator such as a hospital, or are otherwise verifiable. Although this is a natural assumption in some settings, what if it does not hold? Is there any way to mediate the purchase of private data directly from individuals who have the power to lie about their private data?
\item In this paper we considered an extremely simple market, in which there was a single data analyst wanting to buy data from a population. How about a two sided market, in which there are multiple data analysts, competing for access to the private data from multiple populations? Can we privately compute the market clearing prices for access to data in this way?
\item In this paper we considered a one-shot mechanism. In reality, the administrator of a private database will face multiple requests for access to his data as time goes on. How should the data analyst reason about these online requests and his value for the marginal privacy loss that he will incur after answering each request?
\end{enumerate}

\section*{Acknowledgements}
This paper has benefitted from discussions with many people. In particular, we would like to thank Frank McSherry for suggesting the problem of treating differential privacy as a currency to be bought and sold at the 2010 IPAM workshop on differential privacy, Tim Roughgarden for helpful early discussions on benchmarks for prior-free mechanisms for private data, and Sham Kakade, Ian Kash, Michael Kearns, Katrina Ligett, Mallesh Pai, Ariel Procaccia, David Parkes, Kunal Talwar, Salil Vadhan, Jon Ullman, and Bumin Yenmez for helpful comments and discussions. We would also like to thank Yu-Han Lyu for pointing out several typos in an earlier version of the paper.

\bibliographystyle{abbrv}
\newcommand{\etalchar}[1]{$^{#1}$}

\end{document}